\newtheorem{proposition}{Proposition}[section]
\newtheorem{theorem}[proposition]{Theorem}
\newtheorem{lemma}[proposition]{Lemma}
\newtheorem{corollary}[proposition]{Corollary}
\theoremstyle{definition}
\newtheorem{definition}[proposition]{Definition}
\newtheorem{remark}[proposition]{Remark}
\newtheorem{example}[proposition]{Example}
\newcommand{\B}{\mathbb{B}}
\newcommand{\tf}{\tilde{f}}
\newcommand{\tA}{\tilde{A}}
\newcommand{\tV}{\tilde{V}}
\renewcommand{\S}{\mathcal{S}}
\newcommand{\R}{\mathcal{R}}
\newcommand{\0}{\mathbb{0}}
\newcommand{\1}{\mathbb{1}}
\newcommand*\smbox[1]{\tikz[baseline=(char.base)]{\node[shape=rectangle,draw,inner sep=1.5pt] (char) {$#1$};}}
\author[Robert Schwieger and Elisa Tonello]{Robert Schwieger
  \and Elisa Tonello\affiliationmark{1}\thanks{Funded by the Deutsche Forschungsgemeinschaft (DFG, German Research Foundation) under Germany's Excellence Strategy – The Berlin Mathematics Research Center MATH+ (EXC-2046/1, project ID: 390685689).}}
\title[Reduction for asynchronous Boolean networks]{Reduction for asynchronous Boolean networks:
  elimination of negatively autoregulated components}
\affiliation{
  Department of Mathematics and Computer Science, Freie Universit\"at Berlin, Berlin, Germany}
\keywords{Boolean networks, reduction, attractors, regulatory cycles}
\begin{document}
\publicationdata{vol. 25:2}{2023}{27}{10.46298/dmtcs.10930}{2023-02-08}{2023-11-28}
\maketitle
\begin{abstract}
  To simplify the analysis of Boolean networks, a reduction in the number of components is often considered. A popular reduction method consists in eliminating components that are not autoregulated, using variable substitution.
In this work, we show how this method can be extended, for asynchronous dynamics of Boolean networks, to the elimination of vertices that have a negative autoregulation, and study the effects on the dynamics and interaction structure.
For elimination of non-autoregulated variables, the preservation of attractors is in general guaranteed only for fixed points. Here we give sufficient conditions for the preservation of complex attractors. The removal of so called mediator nodes (i.e. vertices with indegree and outdegree one) is often considered, and frequently does not affect the attractor landscape. We clarify that this is not always the case, and in some situations even subtle changes in the interaction structure can lead to a different asymptotic behaviour.
Finally, we use properties of the more general elimination method introduced here to give an alternative proof for a bound on the number of attractors of asynchronous Boolean networks in terms of the cardinality of positive feedback vertex sets of the interaction graph.

\end{abstract}

\section{Introduction}

With increasingly powerful technologies in molecular genetics it is possible to obtain large amount of data which lead to increasingly larger models of complex regulatory networks. This poses problems and limitations on the analysis of such models. While this applies especially to quantitative models (e.g., differential or stochastic models \cite{karlebach2008modelling,radulescu2012reduction,saunders2013coarse}), qualitative models are also increasingly affected. Among the latter, logical models are widely used \cite{abou2016logical,samaga2013modeling,albert2014boolean,le2015quantitative}.

Despite their simplicity, the combinatorial explosion with the increasing number of components makes the rigorous analysis of many models unattainable.
An approach to deal with this problem consists in reducing the size of the original network. There are mainly two strategies in use. The first one relies on trap spaces, i.e. invariant subspaces of the state space \cite{Klarner2015}.
The second approach, on which we will focus here, relies on the assumption that some of the updates, i.e. changes in the components, are happening faster than others. This idea has been developed for the Boolean as well as for the more general multi-valued case \cite{naldi2009reduction,naldi2011dynamically,veliz2011reduction}.
The method allows, in the Boolean formalism, to substitute a variable with the expression defining its update rule. This approach is only possible if the variable is not autoregulated. In terms of asynchronous state transition graphs, the absence of autoregulation guarantees that, for each pair of neighbour states that differ in the variable being eliminated, exactly one of the two states is the source of a transition that changes the value of the variable being eliminated. The other state is therefore the target of this transition, and can be selected as the ``representative'' state; all transitions from representative states are preserved by the elimination. In this setting, we observe that there is a natural way of extending the elimination to variables that are negatively autoregulated. In presence of a possible negative autoregulation, a pair of neighbour states that differ in the variable being eliminated can be connected by transitions in both directions. In this case it is not necessary to choose a representative, and since the two states are part of the same strongly connected component, transitions from any of the two states can be preserved in the reduction.
The elimination method introduced in this work implements this idea.
We show that this extended method affects the interaction graph in a similar way to the original reduction method, with some differences that can concern the introduction of loops. While the preservation of fixed points needs to be refined to account for attractors consisting of two states that can collapse to one, we prove that the total number of attractors cannot decrease with the reduction, as for the original method.
Using these properties, we give an alternative proof for a result, due to Richard~\cite{richard2009positive}, that establishes a bound on the number of attractors of asynchronous Boolean networks in terms of the cardinality of positive feedback vertex sets of the interaction graphs.

The reduced networks of the method introduced in~\cite{naldi2009reduction,naldi2011dynamically,veliz2011reduction} can be computed quite easily, making the approach applicable to very large networks.
While fixed points are always preserved by the elimination of variables that are not autoregulating,
in some cases this reduction approach can change the dynamics of the networks significantly.
Therefore, some effort has been invested in finding conditions on the structure of the network for which it can be guaranteed that a reduction not only preserves fixed points, but all attractors. In \cite{saadatpour2010attractor,saadatpour2013reduction}, the authors suggested the merging of vertices which have in- and outdegree one, so-called simple mediator nodes \cite{saadatpour2013reduction} (also called linear variables in \cite{naldi2023linear}). Here we take a detailed look at these assumptions and show that there are unfortunately still certain cases where attractors are not preserved, despite the claim in \cite{saadatpour2013reduction}. This result does not impact the usefulness of the method suggested in \cite{saadatpour2010attractor,saadatpour2013reduction} since such counterexamples can be quite artificial in nature.

In \cref{sec:background} we set the required notation and give a brief summary of some properties of the reduction method described in~\cite{naldi2009reduction,naldi2011dynamically,veliz2011reduction}. We then introduce a generalisation of this reduction method that can be applied to variables with negative autoregulation, and use it to derive a simple proof for a bound on the number of attractors of Boolean networks (\cref{sec:generalisation-neg-loops}).
In~\cref{sec:attractor-preservation} we discuss the preservation of cyclic attractors under elimination of intermediate components, with or without negative autoregulation.

\section{Background and notation}\label{sec:background}

A Boolean network is a map $f\colon\B^n\to\B^n$, where $\B=\{0,1\}$.
We call $V=\{1,\dots,n\}$ the set of components of the Boolean network, and $\B^n$ the set of states.
Given $x\in\B^n$ and $I\subseteq V$, we denote by $\bar{x}^I$ the element of $\B^n$ such that $\bar{x}^I_i=1-x_i$ for $i\in I$
and $\bar{x}^I_i=x_i$ for $i\notin I$.
We write $\bar{x}$ for $\bar{x}^{V}$, and, given $i\in V$, we write $x^i$ for $x^{\{i\}}$.
In addition, given $a\in\B$, $x^{i=a}$ denotes the element of $\B^n$ obtained from $x$
by setting the $i^{th}$ component to $a$.

This work deals with elimination of variables. From a Boolean network $f\colon\B^n\to\B^n$ with $n$ components we will define a Boolean network $\tf\colon\B^{n-1}\to\B^{n-1}$ with $n-1$ components. To simplify the notation, after removing variable $v$ we will use the indices $\tV=\{1,\dots,v-1,v+1,\dots,n\}$ to identify components of the Boolean network $\tf$ and of states in $\B^{n-1}$. We will write $\pi\colon\B^n\to\B^{n-1}$ for the projection onto the components $\tV$.

The \emph{asynchronous dynamics} or \emph{asynchronous state transition graph} $AD(f)$ associated to a Boolean network $f$ with set of components $V$ is a directed graph with
set of vertices or \emph{states} $\B^n$, and set of edges or \emph{transitions} defined by $\{(x,\bar{x}^i)\ |\ i\in V, f_i(x)\neq x_i\}$.
The asynchronous dynamics is frequently considered when modelling gene regulatory networks~\cite{samaga2013modeling,le2015quantitative,abou2016logical}.

Given $x\in\B^n$, the \emph{(local) interaction graph} $G(f)(x)$ of $f$ at $x$ is the signed directed graph
with set of vertices $V$ and admitting an edge from $j$ to $i$ of sign $s\in\{-1,1\}$ if and only if
$s = (f_i(\bar{x}^j)-f_i(x))(\bar{x}^j_j-x_j)$.
The \emph{(global) interaction graph} $G(f)$ of $f$ is the union of the local interaction graphs, i.e. the signed multidirected graph with set of vertices $V$ and set of edges given by the union of the edges in $G(f)(x)$ for all $x\in\B^n$.
If $G$ has an edge from $j$ to $i$, then $j$ is said to be a \emph{regulator} of $i$. A loop in $G$, that is, an edge of the form $(i,i)$ is also called an \emph{autoregulation} of the variable $i$.
The interaction graph is used to summarize the relationships between variables.
Its features can often be related to properties of state transition graphs (see e.g. \cite{pauleve2012static,comet2013circuit,richard2019positive}).

Edges in state transition graphs and interaction graphs will be denoted with arrows (e.g., $x\to y$ for the edge $(x,y)$).
A path in a directed graph $G$ is defined by a sequence of edges $x^1 \to x^2 \to \dots \to x^{k-1} \to x^k$.
We call the number of edges defining the path the \emph{length} of the path, and the vertices in the path the \emph{support} of the path.
If the edges are signed, we define the sign of the path as the product of the signs of its edges.
If all vertices in the path are distinct, with the possible exception of the first and the last vertices,
we say that the path is \emph{elementary}.
If the first and the last vertices in an elementary path coincide, we call the path a \emph{cycle}.
If a path is a cycle of length one, it will also be called a \emph{loop}.

A \emph{trap set} is a subset $T$ of $\B^n$ such that, for any $x\in T$ and $x\to y$ transition in $AD(f)$, $y$ is in $T$.
The minimal trap sets are call the \emph{attractors} of $AD(f)$. Attractors are called \emph{fixed points} if they
contain only one state, \emph{cyclic attractors} otherwise.

\subsection{Elimination of non-autoregulated components}\label{sec:elim-classical}

A reduction method has been introduced for Boolean and more general discrete networks~\cite{naldi2009reduction,naldi2011dynamically,veliz2011reduction},
which allows to eliminate components that do not admit loops in the interaction graph.
The method has been extensively applied~\cite{calzone2010mathematical,saadatpour2010attractor,saadatpour2011dynamical,grieco2013integrative,paracha2014formal,quinones2014dynamical,zanudo2015cell,flobak2015discovery}.
In the first part of this work we investigate the elimination of components that admit negative autoregulation, in the Boolean case.
The approach provides an extension of the original method, and opens new venues for application.
Before introducing our extension, in this section we summarize some properties of the method introduced in~\cite{naldi2009reduction,naldi2011dynamically} and in the Boolean case in~\cite{veliz2011reduction}, to ease the introduction of the new approach.

Consider a Boolean network $f\colon\B^n\to\B^n$ and a vertex $v\in V$ such that there is no loop at $v$ in $G(f)$,
that is, $f_v(x)=f_v(\bar{x}^v)$ for all $x\in\B^n$.
Define the map
\begin{equation*}
  \begin{aligned}
    \R\colon\B^n&\to\B^n,\\
    x&\mapsto (x_1,\dots,x_{v-1},f_v(x),x_{v+1},\dots,x_n).
  \end{aligned}
\end{equation*}
We say that $\R$ maps each state to its \emph{representative state} in $\B^n$.
The absence of loops at $v$ in $G(f)$ implies that $\R(x)=\R(\bar{x}^v)$ for each $x\in\B^n$, and consequently there are exactly $2^{n-1}$ representative states.
For simplicity, denote by $\pi\colon\B^n\to\B^{n-1}$ the projection onto the variables $V\setminus\{v\}$.
Since $\R(x)=\R(y)$ for all $x,y\in\B^n$ for which $\pi(x)=\pi(y)$ holds,
there is a unique map $\S\colon\B^{n-1}\to\B^n$ that satisfies $\S\circ\pi=\R$ (see~\cref{fig:def-reduction} left).

\begin{figure}
\centering
\begin{tikzcd}
\B^n \arrow[r, "\R"] \arrow[d, "\pi"] & \B^n \\
\B^{n-1} \arrow[ru, "\S", dashed]          &  
\end{tikzcd}\hspace{50pt}
\begin{tikzcd}
\B^n \arrow[r, "f"]           & \B^n \arrow[d, "\pi"] \\
\B^{n-1} \arrow[r, "\tf", dashed] \arrow[u, "\S"] & \B^{n-1}          
\end{tikzcd}
\caption{Commutative diagrams that illustrate the definition of the reduction method described in~\cite{naldi2009reduction,naldi2011dynamically,veliz2011reduction}.}\label{fig:def-reduction}
\end{figure}
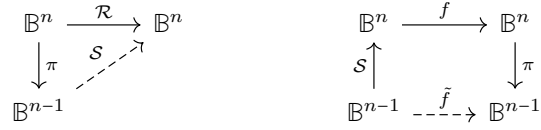

We can then define the reduced Boolean network $\tf\colon\B^{n-1}\to\B^{n-1}$ as follows (see~\cref{fig:def-reduction} right):
\begin{equation}\label{eq:reduction-original}
  \begin{aligned}
    \tf=\pi\circ f\circ\S.
  \end{aligned}
\end{equation}

The effect of the elimination on the asynchronous dynamics is represented in~\cref{fig:idea-of-elim} (left).
For convenience, as mentioned in the background, we use the set $V\setminus\{v\}$ to index the components of $\tf$ and of states in $\B^{n-1}$.

We give a small example for illustration.

\begin{example} Consider the Boolean network $f$ defined as

\begin{equation*}
  \begin{aligned}
    f\colon\B^3&\to\B^3,\\
    x&\mapsto ((\bar{x}_2 \wedge x_3) \vee (x_2 \wedge \bar{x}_3),
               (x_1 \wedge x_3) \vee (\bar{x}_1 \wedge \bar{x}_3),
               (\bar{x}_1 \wedge \bar{x}_2) \vee (x_2 \wedge x_3)),
  \end{aligned}
\end{equation*}

Its state transition graph is depicted in~\cref{fig:small-example-for-general-reduction} left. We remove variable $x_2$. Thus, in the above terminology $\pi$ is the projection onto the first and third component, $\S$ is given by $\S\colon\B^{2}\to\B^3, (x_1, x_3) \mapsto (x_1, (x_1 \wedge x_3) \vee (\bar{x}_1 \wedge \bar{x}_3) ,x_3)$ and $\R$ maps $(x_1,x_2,x_3)$ to $(x_1,(x_1 \wedge x_3) \vee (\bar{x}_1 \wedge \bar{x}_3),x_3)$. The representative states $001$, $010$, $100$ and $111$ are represented in boxes in \cref{fig:small-example-for-general-reduction}. To remove $x_2$ we substitute $x_2$ with $(x_1 \wedge x_3) \vee (\bar{x}_1 \wedge \bar{x}_3)$ in $f_1$ and $f_3$. We obtain:

\begin{equation*}
  \begin{aligned}
    \tf\colon\B^2&\to\B^2,\\
                x&\mapsto (\bar{x}_1, x_3).
  \end{aligned}
\end{equation*}

\begin{figure}
\centering
 \begin{tikzcd}[column sep=tiny,row sep=tiny]
   & 011 \arrow[dl] \arrow[from=dd,to=ddrr] & & \smbox{111} \arrow[ll] \\
   \smbox{001} \arrow[rr,crossing over] & & 101 \arrow[ru] \arrow[dd,crossing over] & \\
   & \smbox{010} & & {\color{black}110} \arrow[dl] \\
   {\color{black}000} \arrow[uu] \arrow[ur] & & \smbox{100} \arrow[ll] &
 \end{tikzcd}\qquad
 \begin{tikzcd}
   01 \arrow[r,yshift=+1pt] & 11 \arrow[l,yshift=-1pt] \\
   00 \arrow[r,yshift=+1pt] & 10 \arrow[l,yshift=-1pt]
 \end{tikzcd}\caption{Illustration of the reduction method described in~\cite{naldi2009reduction}. Representative states are shown in boxes. When the second variable is eliminated, transitions starting from representative states are preserved. The asynchronous dynamics on $\B^3$ on the left reduces to the asynchronous dynamics on $\B^2$ on the right.}\label{fig:small-example-for-general-reduction}
\end{figure}
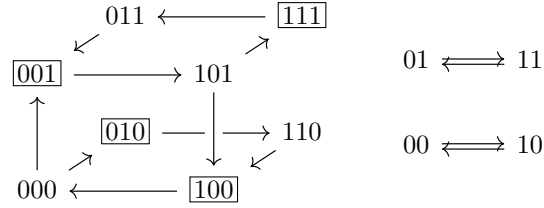

 The state transition graph of the reduced network is represented in~\cref{fig:small-example-for-general-reduction} right.
\end{example}

In the above example we see that some edges ``disappear'' during the reduction. For example there is an edge from $101$ to $100$ in the state transition graph of the original Boolean network while there is no edge from $\pi(101)=11$ to $\pi(100)=10$ in the reduced one. On the other hand, the outgoing edges from the representative states can be found also in the reduced network.
The following results can be found, with slightly different statements, in~\cite{naldi2009reduction,naldi2011dynamically}.
We will prove generalizations of these results in the next section.

\begin{proposition}\label{lemma}
  Consider a Boolean network $f\colon\B^n\to\B^n$ such that there is no loop at $v$ in $G(f)$.
  \begin{enumerate}[label=(\roman*)]
    \item\label{lem:x-to-repr} For each $x\in\B^n$, if $x\neq\R(x)$ there is a transition in $AD(f)$ from $x$ to $\R(x)$.
    \item\label{lem:repr-to-any} For all $x,y\in\B^n$, if $\R(x)\to y$ is a transition in $AD(f)$,
      then $\pi(x)=\pi(\R(x))\to\pi(y)$ is a transition in $AD(\tf)$.
    \item\label{lem:not-infl} For each $x\in\B^n$ and $i\in V \setminus\{v\}$ such that there is no edge $v\to i$ in $G(f)$,
      if $x\to\bar{x}^i$ is a transition in $AD(f)$, then $\R(x)\to\overline{\R(x)}^i$ is a transition in $AD(f)$
      and $\pi(x)\to\pi(\bar{x}^i)$ is a transition in $AD(\tf)$.
  \end{enumerate}
\end{proposition}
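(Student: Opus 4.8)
The plan is to prove the three statements in order, exploiting two structural facts throughout: that $\R$ modifies only the $v$-th coordinate, replacing it by $f_v(x)$, and that the no-loop hypothesis $f_v(x)=f_v(\bar{x}^v)$ makes $f_v$ independent of coordinate $v$.

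For part~(i), I would note that $\R(x)$ and $x$ agree on every coordinate except possibly $v$, so $x\neq\R(x)$ is equivalent to $x_v\neq f_v(x)$. This is exactly the condition defining a transition $x\to\bar{x}^v$ in $AD(f)$; since coordinates are Boolean, $f_v(x)=1-x_v$, whence $\bar{x}^v=\R(x)$ and the transition is $x\to\R(x)$.

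For part~(ii), the first step is to record that representative states are genuine fixed points of the $v$-update: from $f_v(x)=f_v(\bar{x}^v)$ one gets $f_v(\R(x))=f_v(x)=\R(x)_v$, so, writing $r=\R(x)$, we have $r_v=f_v(r)$ and, because $\S\circ\pi=\R$, also $\S(\pi(r))=\R(r)=r$. Now suppose $r\to y$ is a transition in $AD(f)$, say $y=\bar{r}^i$ with $f_i(r)\neq r_i$. The index $i$ cannot equal $v$, since $r_v=f_v(r)$ rules out $f_v(r)\neq r_v$; hence $i\in\tV$ and $\pi(y)=\overline{\pi(r)}^i$. It then remains to compute, from $\tf=\pi\circ f\circ\S$, that $\tf_i(\pi(r))=f_i(\S(\pi(r)))=f_i(r)$, while $\pi(r)_i=r_i$; the inequality $f_i(r)\neq r_i$ therefore produces the transition $\pi(r)\to\pi(y)$ in $AD(\tf)$.

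For part~(iii), I would interpret the absence of an edge $v\to i$ in $G(f)$ as $f_i$ being independent of coordinate $v$. Since $\R(x)$ differs from $x$ only in coordinate $v$, this gives $f_i(\R(x))=f_i(x)$, and as $\R(x)_i=x_i$ for $i\neq v$, the hypothesis $f_i(x)\neq x_i$ yields $f_i(\R(x))\neq\R(x)_i$, i.e.\ the transition $\R(x)\to\overline{\R(x)}^i$ in $AD(f)$. The second claim then follows immediately by applying part~(ii) to this transition (note that $\R(x)$ is a representative state), after observing that $\pi(\R(x))=\pi(x)$ and $\pi(\overline{\R(x)}^i)=\pi(\bar{x}^i)$, both because $\R$ leaves coordinates other than $v$ untouched and flipping $i\neq v$ commutes with $\pi$. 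The arguments are essentially bookkeeping, so the only point requiring genuine care is the reduction step in part~(ii): one must rule out the eliminated coordinate ($i\neq v$) and then correctly unfold the composite definition of $\tf$, using that $\S$ inverts $\pi$ precisely on representative states.
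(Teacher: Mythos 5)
Your proof is correct, and it follows essentially the same route as the paper: the paper does not prove Proposition~\ref{lemma} directly but derives it as the special case $\R^0=\R^1$ of \cref{lemma2}, whose proof uses exactly your observations (that $x\neq\R(x)$ forces $\bar{x}^v=\R(x)$, that $\tf_i(\pi(x))=f_i(\R(x))$ unfolds the composite definition, and that absence of the edge $v\to i$ makes $f_i$ insensitive to the $v$-flip). The extra care you take in part~(ii) to rule out $i=v$ and to note $\S(\pi(\R(x)))=\R(x)$ is the right bookkeeping and matches the paper's reasoning.
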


Even though, in general, the number of attractors can change during the reduction, the number of fixed points remains the same.

\begin{theorem}\label{thm:prev-results}
  Consider a Boolean network $f\colon\B^n\to\B^n$ such that there is no loop at $v$ in $G(f)$.
  \begin{enumerate}[label=(\roman*)]
    \item\label{thm:fixed1} If $x\in\B^n$ is a fixed point for $f$, then $x=\R(x)$, $\pi(x)$ is a fixed point for $\tf$ and no other
                          fixed point for $f$ is projected on $\pi(x)$.
    \item\label{thm:fixed2} If $x\in\B^{n-1}$ is a fixed point for $\tf$, then $\S(x)$ is a fixed point for $f$.
    \item\label{thm:trapsets} If $T\subseteq\B^n$ is a trap set for $f$, then $\pi(T)$ is a trap set for $\tf$.
    \item\label{thm:attrft} If $\tA\subseteq\B^{n-1}$ is a cyclic attractor for $\tf$, there exists at most one attractor for $AD(f)$ intersecting $\pi^{-1}(\tA)$.
  \end{enumerate}
\end{theorem}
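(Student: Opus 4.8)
The four parts build on one another, so I would prove them in order, relying throughout on the identities $\S\circ\pi=\R$, $\pi\circ\S=\mathrm{id}_{\B^{n-1}}$ (apply $\pi$ to $\S\circ\pi=\R$ and use that $\R$ fixes every coordinate in $\tV$), and the fact that every state in the image of $\S$ is a representative, so $(\S(x))_v=f_v(\S(x))$.

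For \ref{thm:fixed1}, if $f(x)=x$ then $f_v(x)=x_v$, so $\R(x)=x$ straight from the definition of $\R$; moreover $\tf(\pi(x))=\pi(f(\S(\pi(x))))=\pi(f(\R(x)))=\pi(f(x))=\pi(x)$, so $\pi(x)$ is fixed. If a second fixed point $y$ satisfies $\pi(y)=\pi(x)$, then $y=\R(y)=\S(\pi(y))=\S(\pi(x))=\R(x)=x$, giving uniqueness. For \ref{thm:fixed2}, set $z=\S(x)$; from $\tf(x)=x$ and $\pi(z)=x$ I obtain that $f(z)$ agrees with $z$ on every coordinate in $\tV$, while on coordinate $v$ the representative property gives $f_v(z)=z_v$, so $f(z)=z$. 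Both parts are short computations.

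Part \ref{thm:trapsets} is where the dynamics enters. Given a trap set $T$ and $x\in\pi(T)$ with a transition $x\to\bar{x}^i$ in $AD(\tf)$, the plan is to lift it: choose $z\in T$ with $\pi(z)=x$; part \ref{lem:x-to-repr} of \cref{lemma}, together with $T$ being a trap set, forces $\R(z)=\S(x)\in T$. Since $i\in\tV$, the transition $x\to\bar{x}^i$ means $f_i(\S(x))\neq(\S(x))_i$, hence $\S(x)\to\overline{\S(x)}^i$ is a transition in $AD(f)$ whose target lies in $T$; projecting gives $\bar{x}^i\in\pi(T)$. Thus $\pi(T)$ is a trap set.

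The genuinely delicate statement is \ref{thm:attrft}, and this is where I expect the main obstacle. The key claim I would isolate is that every attractor $A$ of $AD(f)$ meeting $\pi^{-1}(\tA)$ contains the entire set of representatives $\{\S(x)\mid x\in\tA\}$. Granting this, two attractors both meeting $\pi^{-1}(\tA)$ share this nonempty set, and since distinct attractors (being distinct minimal trap sets) are disjoint, they coincide, which is the asserted ``at most one''. To establish the claim I would first invoke part \ref{lem:x-to-repr} of \cref{lemma} to replace a state of $A\cap\pi^{-1}(\tA)$ by its representative $\S(x_0)\in A$ with $x_0\in\tA$, and then propagate along $\tA$ using a one-step lifting lemma mirroring the argument for part \ref{thm:trapsets}: if $\S(x)\in A$ and $x\to\bar{x}^i$ is a transition in $AD(\tf)$, then $\S(x)\to\overline{\S(x)}^i$ is a transition in $AD(f)$, its target lies in $A$, and a further application of part \ref{lem:x-to-repr} of \cref{lemma} places $\S(\bar{x}^i)\in A$. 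Because an attractor of the asynchronous dynamics is a terminal strongly connected component, $\tA$ is strongly connected, so a path from $x_0$ to any $x\in\tA$ exists inside $\tA$; iterating the lifting lemma along such a path yields $\S(x)\in A$. The hard part is exactly making this propagation airtight — keeping the lifted path inside $\pi^{-1}(\tA)$ and checking that each step genuinely deposits a new representative into the trap set $A$ — after which disjointness of distinct attractors closes the proof.
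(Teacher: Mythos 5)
Your proposal is correct, and it follows essentially the same route the paper takes: the paper does not prove \cref{thm:prev-results} directly but obtains it as the special case of \cref{lemma2} and \cref{thm:fixed-points} (with $\R^0=\R^1=\R$), where parts (i)--(iii) are the same short computations and trap-set lifting you give, and part (iv) is proved by the same mechanism of pushing states to their representatives and propagating representatives along paths in $\tA$ via the lifted transitions $\S(x)\to\overline{\S(x)}^i$. The only cosmetic difference is that the paper phrases the uniqueness step as mutual reachability of all representatives over $\tA$ inside $\pi^{-1}(\tA)$, whereas you argue that any attractor meeting $\pi^{-1}(\tA)$ must absorb the whole set $\S(\tA)$; these are the same argument.
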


\section{Generalisation to vertices with optional negative autoregulation}\label{sec:generalisation-neg-loops}

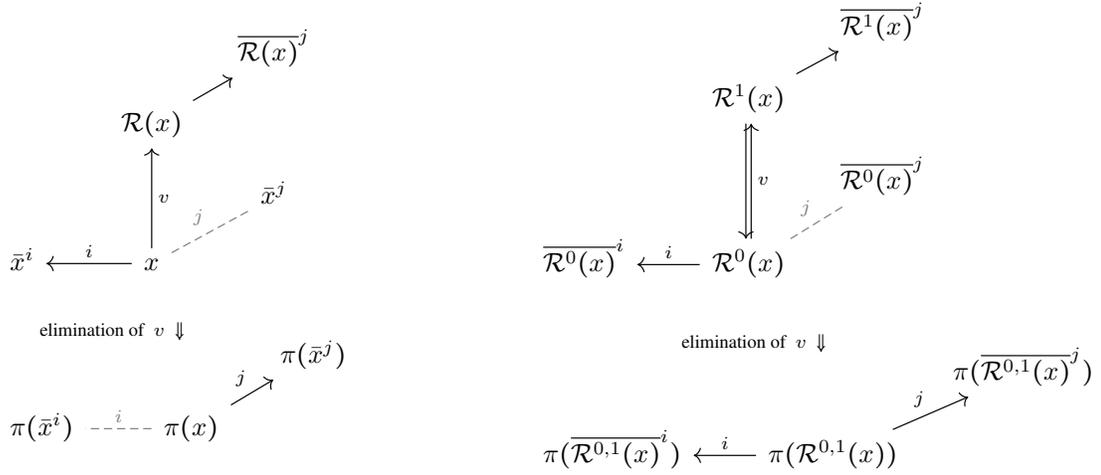
\begin{figure}
\centering
\begin{minipage}{7cm}
  \begin{tikzcd}[row sep=small,column sep=small]
    & & & \overline{\R(x)}^j \\
    & & \R(x) \arrow[ur] & \\
    & & & \bar{x}^j \\
    \bar{x}^i & & x \arrow[uu,"v"'] \arrow[ll,"i"'] \arrow[ur,dash,dashed,gray,"j"] &
  \end{tikzcd}

  \vspace{-40pt}
  \begin{tikzcd}[row sep=small,column sep=small]
    & & & {\color{white}\overline{\R(x)}^j} \\
    & & {\color{white}\R(x)} \arrow[white]{dd}[black,near start,swap]{\Large \text{elimination of}\ v\ \Downarrow} & \\
    & & & \pi(\bar{x}^j) \\
    \pi(\bar{x}^i) & & \pi(x) \arrow[ll,dash,dashed,gray,"i"'] \arrow[ur,"j"] &
  \end{tikzcd}
\end{minipage}
\begin{minipage}{7cm}
  \begin{tikzcd}[row sep=small,column sep=small]
    & & & \overline{\R^1(x)}^j \\
    & & \R^1(x) \arrow[ur] \arrow[dd,xshift=-1pt] & \\
    & & & \overline{\R^0(x)}^j \\
    \overline{\R^0(x)}^i & & \R^0(x) \arrow[uu,xshift=1pt,"v"'] \arrow[ll,"i"'] \arrow[ur,dash,dashed,gray,"j"] &
  \end{tikzcd}

  \vspace{-40pt}
  \begin{tikzcd}[row sep=small,column sep=small]
    & & & {\color{white}\overline{\R(x)}^j} \\
    & & {\color{white}\R(x)} \arrow[white]{dd}[black,near start,swap]{\Large \text{elimination of}\ v\ \Downarrow} & \\
    & & & \pi(\overline{\R^{0,1}(x)}^j) \\
    \pi(\overline{\R^{0,1}(x)}^i) & & \pi(\R^{0,1}(x)) \arrow[ll,"i"'] \arrow[ur,"j"] &
  \end{tikzcd}
\end{minipage}
\caption{Illustration of the effect of elimination of one variable ($v$)
on asynchronous state transition graphs
in case of no loops in $G(f)(x)$ at $v$ (left) and a negative loop in $G(f)(x)$ at $v$ (right).
In the first case, only transitions that start at the representative state $\R(x)$ are preserved.
In the second case, transitions out of both $\R^0(x)$ and $\R^1(x)$ are preserved.
}\label{fig:idea-of-elim}
\end{figure}

The goal of this section is to generalize the elimination method from the last section to variables with negative autoregulation. The method summarised in \cref{sec:elim-classical} applies to the elimination of variables which are not autoregulated. In this case the eliminated variable is replaced by its update function.
To generalize this idea, we substitute a variable with a more complicated expression derived from its update function. If there is no autoregulation in a state, this expression coincides with the update function of the variable. If there is a negative autoregulation, the variable we want to remove oscillates at some state. In this case, the expression is constructed in such a way that all transitions originating from this state, as well as from its neighbouring state differing only in the eliminated component, are retained in the reduced network.

Fix a Boolean network $f\colon\B^n\to\B^n$ and a vertex $v \in V$ such that there is no positive loop at $v$ in $G(f)$. Since $v$ is potentially autoregulated, the definition of representative state of the previous section cannot be applied.
Since the value of component $v$ might oscillate, we have to introduce two new functions, the maps

\begin{equation*}
  \begin{aligned}
    \R^a\colon\B^n&\to\B^n,\\
    x&\mapsto (x_1,\dots,x_{v-1},f_v(x^{v=a}),x_{v+1},\dots,x_n),
  \end{aligned}
\end{equation*}

for $a \in \{0,1\}$.
For $x\in \B^n$, $\R^0(x)$ and $\R^1(x)$ differ if $f_v(x^{v=0})\neq f_v(x^{v=1})$, that is, if component $v$ is autoregulated at $x$.

\begin{remark}\label{rmk:R0noteqR1}
  Observe that, if $\R^0(x)\neq\R^1(x)$, since $v$ is not positively autoregulated, we have that $v$ is negatively autoregulated at $x$,
  $f_v(x^{v=0})=1$ and $f_v(x^{v=1})=0$, and therefore $\R^0(x)=x^{v=1}$, $\R^1(x)=x^{v=0}$.
\end{remark}

Clearly we have $\pi(x)=\pi(\R^0(x))=\pi(\R^1(x))$, and, similarly to the case of the previous section, there are two unique maps $\S^0,\S^1\colon\B^{n-1}\to\B^n$
that satisfy $\R^0=\S^0\circ\pi$, $\R^1=\S^1\circ\pi$.

We can now introduce the reduced Boolean network $\tf\colon\B^{n-1}\to\B^{n-1}$ defined by
\begin{equation}\label{eq:casesftilde}
\tf_i(x)=\begin{cases}
           f_i(\S^0(x))\wedge f_i(\S^1(x)) & \text{ if }x_i=1,\\
           f_i(\S^0(x))\vee f_i(\S^1(x)) & \text{ if }x_i=0.
         \end{cases}
\end{equation}

More compactly, we can write that $\tf_i(x) = x_i$ if and only if $f_i(\S^0(x)) = f_i(\S^1(x)) = x_i$.

Observe that if $v$ is not autoregulated the equalities $\R^0=\R^1$, $\S^0=\S^1$ hold and therefore in this case the definition of $\tf$ coincides with the definition of $\tf$ in~\cref{eq:reduction-original}. In other words, the above reduction method is a generalization of the reduction method reviewed in the last section. 

If $v$ is instead autoregulated at $x$ and $\R^0(x) \not = \R^1(x)$, the intuition is that all the outgoing edges of both $\R^0(x)$ and $\R^1(x)$ along the components $V \setminus \{v\}$ are preserved in the reduced network. \cref{fig:idea-of-elim} (right) gives an illustration of this idea.

\begin{lemma}\label{lemma2}
  Consider a Boolean network $f\colon\B^n\to\B^n$ such that there is no positive loop at $v$ in $G(f)$. Then:
  \begin{enumerate}[label=(\roman*)]
    \item\label{lem:x-to-repr2} For each $x\in\B^n$ such that $x\neq\R^0(x)$ or $x\neq\R^1(x)$ there is a transition in $AD(f)$ from $x$ to $\R^0(x)$ or to $\R^1(x)$
      and $\{\R^0(x),\R^1(x)\}$ is strongly connected.
    \item\label{lem:repr-to-any2} For all $x\in\B^n$ and $a=0,1$, if $\R^a(x)\to y$ is a transition in $AD(f)$ in direction $i \in V \setminus \{v\}$,
      then $\pi(x)=\pi(\R^a(x))\to\pi(\overline{x}^i)$ is a transition in $AD(\tf)$.
    \item\label{lem:not-infl2} For each $x\in\B^n$ and $i\in V\setminus\{v\}$ such that there is no edge from $v$ to $i$ in $G(f)$,
      if $x\to\bar{x}^i$ is a transition in $AD(f)$, then $\R^a(x)\to\overline{\R^a(x)}^i$ is a transition in $AD(f)$ for $a=0,1$
      and $\pi(x)\to\pi(\bar{x}^i)$ is a transition in $AD(\tf)$.
    \item\label{lem:tr-in-proj} If $x\to\bar{x}^i$ is a transition in $AD(\tf)$, then there exists $a\in\{0,1\}$
       such that $\S^a(x)\to\overline{\S^a(x)}^i$ is a transition in $AD(f)$.
       In addition, from each $y \in \pi^{-1}(x)$ there exists a path to $\overline{\S^a(x)}^i$.
  \end{enumerate}
\end{lemma}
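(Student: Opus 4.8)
The plan is to prove the four items in the order stated, since each later item can reuse the earlier ones. Two conventions will be used throughout. First, for $i\in V\setminus\{v\}$ the maps $\R^a$ and $\S^a$ leave coordinate $i$ untouched, so $\R^a(x)_i=x_i$ and $\S^a(\pi(x))_i=x_i$, and moreover $\R^a=\S^a\circ\pi$. Second, a transition $z\to\bar z^i$ in $AD(f)$ is by definition the condition $f_i(z)\neq z_i$, and a transition $u\to\bar u^i$ in $AD(\tf)$ the condition $\tf_i(u)\neq u_i$; so every item reduces to an equality/inequality check on coordinate values.

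For item \ref{lem:x-to-repr2} I would split on whether $\R^0(x)=\R^1(x)$. If they coincide, $v$ is not autoregulated at $x$ and the common value is $x^{v=b}$ with $b=f_v(x^{v=0})=f_v(x^{v=1})$; the hypothesis (which here reads $x\neq\R^0(x)$) forces $x_v\neq b$, hence $f_v(x)=b\neq x_v$, so the $v$-flip is exactly the transition $x\to\R^0(x)$, and strong connectivity of the singleton is trivial. If $\R^0(x)\neq\R^1(x)$, \cref{rmk:R0noteqR1} gives $\R^0(x)=x^{v=1}$, $\R^1(x)=x^{v=0}$ with $f_v(x^{v=0})=1$ and $f_v(x^{v=1})=0$; then $x$ is one of these two states, its $v$-flip lands on the other (the missing representative), and the two $v$-flips read in both directions witness that $\{x^{v=0},x^{v=1}\}=\{\R^0(x),\R^1(x)\}$ is strongly connected.

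Items \ref{lem:repr-to-any2} and \ref{lem:not-infl2} I would obtain straight from \cref{eq:casesftilde}. Writing $u=\pi(x)$, a transition $\R^a(x)\to\overline{\R^a(x)}^i$ means $f_i(\S^a(u))\neq u_i$; inspecting $u_i=1$ (conjunction) and $u_i=0$ (disjunction) shows that a single index $a$ with $f_i(\S^a(u))\neq u_i$ already forces $\tf_i(u)\neq u_i$, which is item \ref{lem:repr-to-any2}. For item \ref{lem:not-infl2}, the absence of an edge $v\to i$ means $f_i$ is independent of coordinate $v$, so $f_i(\R^a(x))=f_i(x)$; together with $\R^a(x)_i=x_i$ and the hypothesis $f_i(x)\neq x_i$ this gives the $AD(f)$-transition $\R^a(x)\to\overline{\R^a(x)}^i$ for both $a$, and the $AD(\tf)$-transition then follows from item \ref{lem:repr-to-any2}.

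The bulk of the work, and the main obstacle, is item \ref{lem:tr-in-proj}. Its first statement is again immediate from the compact form $\tf_i(x)=x_i\iff f_i(\S^0(x))=f_i(\S^1(x))=x_i$: negating this yields an $a$ with $f_i(\S^a(x))\neq x_i=\S^a(x)_i$, i.e. the transition $\S^a(x)\to\overline{\S^a(x)}^i$ in $AD(f)$. The delicate part is reaching $\overline{\S^a(x)}^i$ from every $y\in\pi^{-1}(x)$. Here I would note that the fibre $\pi^{-1}(x)$ has exactly two elements and that $\R^b(y)=\S^b(x)$ for each $y$ in it, so $\{\S^0(x),\S^1(x)\}$ is precisely the set of representatives of the fibre, and then split once more. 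If $\S^0(x)\neq\S^1(x)$, the fibre equals $\{\S^0(x),\S^1(x)\}$ and is strongly connected by item \ref{lem:x-to-repr2}, so from either element one travels to $\S^a(x)$ and then flips coordinate $i$. If $\S^0(x)=\S^1(x)=:r$, the fibre is $\{r,\bar r^v\}$, and item \ref{lem:x-to-repr2} supplies a transition $\bar r^v\to r$ (because $\R^0(\bar r^v)=r\neq\bar r^v$), after which $r\to\bar r^i=\overline{\S^a(x)}^i$ completes the path in $AD(f)$. I expect the attention to go into treating these two configurations of the fibre uniformly and into checking that the transition produced by the first statement departs from exactly the representative that the paths of item \ref{lem:x-to-repr2} land on; that bookkeeping is where the argument is least mechanical.
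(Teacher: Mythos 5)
Your proof is correct and follows essentially the same route as the paper's: the case split on $\R^0(x)=\R^1(x)$ via \cref{rmk:R0noteqR1} for item \ref{lem:x-to-repr2}, direct unfolding of \cref{eq:casesftilde} for items \ref{lem:repr-to-any2}--\ref{lem:tr-in-proj}, and reduction of the reachability claim in \ref{lem:tr-in-proj} to item \ref{lem:x-to-repr2}. Your only deviations are cosmetic simplifications -- a uniform one-witness argument for \ref{lem:repr-to-any2} where the paper splits cases, and a fully spelled-out fibre analysis where the paper just cites point \ref{lem:x-to-repr2}.
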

\begin{proof}
  \begin{enumerate}[label=(\roman*)]
    \item If $\R^0(x)=\R^1(x)$, then $\bar{x}^v=\R^0(x)=\R^1(x)$ and there is a transition
      in $AD(f)$ from $x$ to $\R^0(x)$ and $\R^1(x)$ as a consequence of the definition of asynchronous state transition graph.

      If $\R^0(x)\neq\R^1(x)$, then from~\cref{rmk:R0noteqR1} we have $f_v(x^{v=0})=1$ and $f_v(x^{v=1})=0$,
      and there is a transition in $AD(f)$ from $x^{v=0}$ to $\overline{x^{v=0}}^v=\R^0(x)$ and from $x^{v=1}$ to $\overline{x^{v=1}}^v=\R^1(x)$.

    \item If $\R^0(x)=\R^1(x)$, then $\tf_i(\pi(x))=f_i(\R^0(x))=f_i(\R^1(x))\neq \R^0(x)_i=\R^1(x)_i=x_i$.

      If $\R^0(x)\neq\R^1(x)$, then from~\cref{rmk:R0noteqR1} we have $\R^0(x)=x^{v=1}$, $\R^1(x)=x^{v=0}$,
      and either $f_i(\R^0(x))\neq x_i$ or $f_i(\R^1(x))\neq x_i$.
      If $x_i=1$, then $\tf_i(\pi(x))=f_i(\R^0(x))\wedge f_i(\R^1(x))=0$.
      If $x_i=0$, then $\tf_i(\pi(x))=f_i(\R^0(x))\vee f_i(\R^1(x))=1$, as required.

    \item Since $i$ does not depend on $v$, we have $f_i(\R^0(x))=f_i(\R^1(x))=f_i(x)\neq x_i=\R^0(x)_i=\R^1(x)_i$,
      which gives the first part.
      For the second, it is sufficient to observe that $\tf_i(\pi(x))=f_i(\R^0(x))=f_i(\R^1(x))\neq\pi(x)_i$.

    \item Since $i \neq v$ we have $x_i = \S^0(x)_i=\S^1(x)_i$. Since there is a transition $x\to\bar{x}^i$ in $AD(\tf)$, by definition of $\tf$ either $f_i(\S^0(x)) \neq x_i$ or $f_i(\S^1(x)) \neq x_i$, that is, either $\S^0(x)\to\overline{\S^0(x)}^i$ or $\S^1(x)\to\overline{\S^1(x)}^i$ is a transition in $AD(f)$.

The second part follows from point~\ref{lem:x-to-repr2}.
  \end{enumerate}
\end{proof}

The following result generalizes \cref{thm:prev-results}. For \cref{thm:fixed-points} (iii) note that if $v$ is not autoregulated the set $\{\S^0(x),\S^1(x)\}$ has cardinality one, hence it is a fixed point and the result generalizes \cref{thm:prev-results} (ii).  
\begin{theorem}\label{thm:fixed-points}
  Consider a Boolean network $f\colon\B^n\to\B^n$ such that there is no positive loop at $v$ in $G(f)$.
 Then:
  \begin{enumerate}[label=(\roman*)]
    \item if $x\in\B^n$ is a fixed point for $f$, then $x=\R^0(x)=\R^1(x)$, $\pi(x)$ is a fixed point for $\tf$ and no other
                          fixed point for $f$ is projected on $\pi(x)$.
    \item if $\{x,\bar{x}^v\}$ is a cyclic attractor for $AD(f)$, then $\pi(x)$ is a fixed point for $\tf$.
    \item if $x\in\B^{n-1}$ is a fixed point for $\tf$, then the set $\{\S^0(x),\S^1(x)\}$
      is an attractor of $AD(f)$.
    \item if $T\subseteq\B^n$ is a trap set for $f$, then $\pi(T)$ is a trap set for $\tf$.
    \item if $\{x,\bar{x}^i\}$ is a cyclic attractor of $AD(f)$ for some $x\in\B^n$ and $i\neq v$,
      then the set $\{\pi(x),\pi(\bar{x}^i)\}$ is a cyclic attractor of $AD(\tf)$.
    \item if $\tA\subseteq\B^{n-1}$ is an attractor for $\tf$,
       there exists at most one attractor for $AD(f)$ intersecting $\pi^{-1}(\tA)$.
  \end{enumerate}
\end{theorem}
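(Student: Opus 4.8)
The plan is to derive all six parts from \cref{lemma2}, \cref{rmk:R0noteqR1}, and the dichotomy forced by the absence of a positive loop at $v$: for every $x\in\B^n$ the pair $(f_v(x^{v=0}),f_v(x^{v=1}))$ equals $(0,0)$, $(1,1)$ or $(1,0)$, but never $(0,1)$. For part (i), a fixed point $x$ satisfies $f_v(x^{v=x_v})=x_v$, and the dichotomy excludes the mixed configuration, forcing $f_v(x^{v=0})=f_v(x^{v=1})=x_v$, i.e. $x=\R^0(x)=\R^1(x)$; hence $\S^0(\pi(x))=\S^1(\pi(x))=x$, and the compact criterion ``$\tf_i(\pi(x))=\pi(x)_i$ iff $f_i(\S^0(\pi(x)))=f_i(\S^1(\pi(x)))=x_i$'' makes $\pi(x)$ a fixed point; uniqueness follows since any fixed point $y$ over $\pi(x)$ obeys $y=\R^0(y)=\S^0(\pi(y))=\S^0(\pi(x))=x$. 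For part (ii), minimality of the attractor $\{x,\bar x^v\}$ forbids escaping transitions, so $f_i(x)=f_i(\bar x^v)=x_i$ for $i\neq v$, while the two transitions in direction $v$ force autoregulation; by \cref{rmk:R0noteqR1} then $\{\R^0(x),\R^1(x)\}=\{x,\bar x^v\}$, so $f_i(\S^0(\pi(x)))=f_i(\S^1(\pi(x)))=x_i$ for $i\neq v$ and the same criterion makes $\pi(x)$ a fixed point of $\tf$.

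Parts (iii)--(v) concern the reverse direction and the preservation of trap sets. For (iii), split on whether $\S^0(x)=\S^1(x)$: if they agree on some $w$, then $\pi(w)=x$ and $\R^0(w)=\R^1(w)=w$ give $f_v(w)=w_v$, which together with $f_i(w)=x_i=w_i$ for $i\neq v$ makes $w$ a fixed point; if they differ, \cref{rmk:R0noteqR1} identifies them as the two lifts of $x$, joined by a $v$-labelled $2$-cycle, and the fixed-point condition $f_i(\S^0(x))=f_i(\S^1(x))=x_i$ for $i\neq v$ rules out escaping transitions, so the pair is a strongly connected trap set, hence an attractor. For (iv), pick $u=\pi(z)\in\pi(T)$ with $z\in T$ and a transition $u\to\bar u^i$ in $AD(\tf)$; \cref{lem:tr-in-proj} provides $a$ and a path in $AD(f)$ from $z\in\pi^{-1}(u)$ to $\overline{\S^a(u)}^i$, which stays in the trap set $T$, and since $\pi(\overline{\S^a(u)}^i)=\bar u^i$ we get $\bar u^i\in\pi(T)$. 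For (v), no escape from $\{x,\bar x^i\}$ gives $f_j(x)=f_j(\bar x^i)=x_j$ for $j\neq i$; in particular $f_v(x)=f_v(\bar x^i)=x_v$, so by the dichotomy $v$ is unautoregulated at both, making $x$ and $\bar x^i$ their own representatives, so \cref{lem:repr-to-any2} sends the two $i$-transitions to $\pi(x)\to\pi(\bar x^i)$ and $\pi(\bar x^i)\to\pi(x)$, while the fixed-coordinate criterion shows $\tf_j(\pi(x))=\pi(x)_j$ and $\tf_j(\pi(\bar x^i))=\pi(\bar x^i)_j$ for every $j\neq i$; as $i\neq v$ the two projected states are distinct, so $\{\pi(x),\pi(\bar x^i)\}$ is a strongly connected trap set.

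Part (vi) is the crux, and I reduce it to one claim: for every attractor $A$ of $AD(f)$ meeting $\pi^{-1}(\tA)$ and every $u\in\tA$, both $\S^0(u)$ and $\S^1(u)$ lie in $A$. Granting this, any two such attractors share $\S^0(u)$ and hence coincide, since distinct attractors, being minimal trap sets, are disjoint. To prove the claim I propagate over $\tA$. The basic move is that whenever $z\in A$, \cref{lem:x-to-repr2} supplies a transition from $z$ into $\{\R^0(z),\R^1(z)\}$ together with the strong connectivity of that pair, so both $\R^0(z)=\S^0(\pi(z))$ and $\R^1(z)=\S^1(\pi(z))$ belong to the trap set $A$; applied to a state $z\in A\cap\pi^{-1}(\tA)$ this settles the base case $u_0=\pi(z)$. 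For the inductive step, if the claim holds at $u$ and $u\to\bar u^i$ is an edge of $AD(\tf)$ inside $\tA$, then \cref{lem:tr-in-proj} yields $a$ with $\S^a(u)\to\overline{\S^a(u)}^i$ in $AD(f)$; as $\S^a(u)\in A$ the target lies in $A$ and projects to $\bar u^i$, and feeding it back into the basic move gives $\S^0(\bar u^i),\S^1(\bar u^i)\in A$. Since $\tA$ is strongly connected, every vertex is reached from $u_0$, so the claim holds throughout.

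The main obstacle is this propagation in (vi): the delicate point is to upgrade the information from a single state of $A$ lying over $\tA$ to control of both lifts over every vertex of $\tA$, which requires interleaving reachability of representatives inside a trap set (\cref{lem:x-to-repr2}) with the existence of a lifted transition for each edge of $AD(\tf)$ (\cref{lem:tr-in-proj}). The other five parts reduce to the configuration dichotomy and the compact description of $\tf$, and should be routine.
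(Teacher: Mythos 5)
Your proof is correct and follows essentially the same route as the paper's: every part is derived from \cref{lemma2} and \cref{rmk:R0noteqR1} in the same way, and your propagation argument for (vi) is the paper's strong-connectivity argument for $\S^0(\tA)\cup\S^1(\tA)$ recast as an induction along paths in $\tA$. The only deviations are cosmetic and harmless, e.g.\ your part (iv) invokes the reachability statement in the second half of \cref{lemma2}~\ref{lem:tr-in-proj} directly where the paper performs a short case split on $\S^0(x)=\S^1(x)$.
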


\begin{proof}
  \begin{enumerate}[label=(\roman*)]
    \item Since $x$ is fixed, by~\cref{lemma2}~\ref{lem:x-to-repr2} we have $x=\R^0(x)=\R^1(x)$.
          $\pi(x)$ is fixed for $\tf$ as a consequence of~\cref{lemma2}~\ref{lem:tr-in-proj},
          and the absence of a positive loop at $v$ gives that $\bar{x}^v$ is not fixed.
    \item Consequence of~\cref{lemma2}~\ref{lem:tr-in-proj}.
    \item The set $\{\S^0(x),\S^1(x)\}$ consists either of one state if $\S^0(x)=\S^1(x)$, or two strongly connected
          states if $\S^0(x)\neq\S^1(x)$ (\cref{lemma2}\ref{lem:x-to-repr2}). In addition, the set is a trap set by~\cref{lemma2}~\ref{lem:repr-to-any2}.
    \item If $x\in\pi(T)$ and $x\to\bar{x}^i$ is a transition in $AD(\tf)$, by~\cref{lemma2}~\ref{lem:tr-in-proj}
          there is a transition $\S^j(x)\to\overline{\S^j(x)}^i$ for some $j\in\{0,1\}$.
          Take $y\in T$ such that $\pi(y)=x$.

          If $\S^0(x)=\S^1(x)$, then either $y=\S^0(x)=\S^1(x)$ or, by~\cref{lemma2}\ref{lem:x-to-repr2}, there is a transition from $y$ to $\S^0(x)=\S^1(x)$.
          If $\S^0(x)\neq \S^1(x)$, then $\S^0(x)$ and $\S^1(x)$ are strongly connected by~\cref{lemma2}\ref{lem:x-to-repr2} and hence belong to $T$.
          In both cases $\overline{\S^j(x)}^i$ is in $T$ and $\bar{x}^i=\pi(\overline{\S^j(x)}^i) \in \pi(T)$.
    \item Since $\{x,\bar{x}^i\}$ is a cyclic attractor of $AD(f)$ we have $f_v(x)=f_v(\bar{x}^i)=x_v$ and therefore $\R^0(x)=\R^1(x)=x$, $\R^0(\bar{x}^i)=\R^1(\bar{x}^i)=\bar{x}^i$.
          Then $\{\pi(x),\pi(\bar{x}^i)\}$ is strongly connected as a consequence of~\cref{lemma2}~\ref{lem:repr-to-any2}.
          It is a trap set by the previous point.
    \item From point $(i)$ of~\cref{lemma2}, we know that from all states in $\B^n$ there is a transition to $\R^0(\B^n)\cup\R^1(\B^n)=\S^0(\pi(\B^n))\cup\S^1(\pi(\B^n))$. Hence it is sufficent to show that, for each pair $x,y\in\S^0(\tA)\cup\S^1(\tA)$, there exists a path from $x$ to $y$ in $\pi^{-1}(\tA)$.

          Write $x=\S^j(a)$, $y=\S^k(b)$ for some $j,k\in\{0,1\}$ and $a,b\in\tA$.
          Since $\tA$ is strongly connected, there exists a path from $a$ to $b$ in $\tA$.
          Consider a transition $c\to\bar{c}^i$ in this path.
          By~\cref{lemma2}\ref{lem:tr-in-proj}, there exists $h\in\{0,1\}$ such that there exist paths from $\S^0(c)$ and from $\S^1(c)$ to $\overline{\S^h(c)}^i$ in $AD(f)$, with $\pi(\overline{\S^h(c)}^i)=\bar{c}^i$.
          By point $(i)$ of~\cref{lemma2} there exists a path from $\overline{\S^h(c)}^i$ to $\R^0(\overline{\S^h(c)}^i)=\S^0(\pi(\overline{\S^h(c)}^i))=\S^0(\bar{c}^i)$ and $\R^1(\overline{\S^h(c)}^i)=\S^1(\pi(\overline{\S^h(c)}^i))=\S^1(\bar{c}^i)$, which concludes.
  \end{enumerate}
\end{proof}

Denote by $S(f)$ the number of fixed points of $f$, by $A(f)$ the number of cyclic attractors of $f$,
and by $A(f,i)$ the number of cyclic attractors of $AD(f)$ consisting of two states that differ in component $i$.
\begin{corollary}\label{cor:fixed-points-and-2}
  If $\tf$ is obtained from $f$ by eliminating component $v$, then
  \begin{enumerate}[label=(\roman*)]
    \item\label{cor:equality-bound-fixed} $S(\tf) = S(f) + A(f, v)$ and hence $S(f)\leq S(\tf)$.
    \item\label{cor:bound2} For all $i\neq v$, $A(f, i)\leq A(\tf, i)$.
    \item\label{cor:bound} $S(f)+A(f)\leq S(\tf)+A(\tf)$.
  \end{enumerate}
\end{corollary}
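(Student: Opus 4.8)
The plan is to derive all three inequalities from explicit correspondences between attractors, feeding into the counting and uniqueness statements of \cref{thm:fixed-points}.

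For part~\ref{cor:equality-bound-fixed} I would exhibit a bijection between the fixed points of $\tf$ and the disjoint union of the fixed points of $f$ with the two-state cyclic attractors of $f$ whose states differ in component $v$. Sending a fixed point $x$ of $\tf$ to $\{\S^0(x),\S^1(x)\}$ gives an attractor of $AD(f)$ by \cref{thm:fixed-points}~(iii); this set is a singleton exactly when it is a fixed point of $f$, and otherwise \cref{rmk:R0noteqR1} forces its two states to differ precisely in $v$, so it is one of the attractors counted by $A(f,v)$. Conversely \cref{thm:fixed-points}~(i) and (ii) send a fixed point of $f$, respectively a pair $\{y,\bar y^v\}$, to the fixed point $\pi(y)$ of $\tf$. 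The two composites are identities because $\S^a\circ\pi=\R^a$, and on a fixed point one has $y=\R^0(y)=\R^1(y)$ while on a $v$-oscillating pair $\R^0(y)$ and $\R^1(y)$ return the two states. As the two target families are disjoint, this gives $S(\tf)=S(f)+A(f,v)$, whence $S(f)\le S(\tf)$.

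For part~\ref{cor:bound2}, fixing $i\neq v$, I would map each two-state cyclic attractor $\{x,\bar x^i\}$ of $f$ to $\{\pi(x),\pi(\bar x^i)\}$; by \cref{thm:fixed-points}~(v) this is a two-state cyclic attractor of $\tf$, and its states differ in $i$ since $i$ survives the projection. The point is injectivity: if two such attractors shared an image $\tA$, both would meet $\pi^{-1}(\tA)$, which \cref{thm:fixed-points}~(vi) forbids unless they coincide. Hence $A(f,i)\le A(\tf,i)$.

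For part~\ref{cor:bound} I would build an injection from the attractors of $f$ to those of $\tf$. For an attractor $A$ of $f$, the projection $\pi(A)$ is a non-empty trap set of $\tf$ by \cref{thm:fixed-points}~(iv), and since every non-empty trap set of a finite state transition graph contains an attractor, I may choose an attractor $\tA\subseteq\pi(A)$ and set $\Phi(A)=\tA$. Injectivity again comes from \cref{thm:fixed-points}~(vi): if $\Phi(A_1)=\Phi(A_2)=\tA$ then $\tA\subseteq\pi(A_j)$ forces $A_j\cap\pi^{-1}(\tA)\neq\emptyset$, so $A_1=A_2$. Counting fixed points and cyclic attractors together, the existence of $\Phi$ yields $S(f)+A(f)\le S(\tf)+A(\tf)$. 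The recurring obstacle in parts~\ref{cor:bound2} and~\ref{cor:bound} is precisely this injectivity, which each time is supplied by the ``at most one preimage attractor'' property of \cref{thm:fixed-points}~(vi); the only ingredient from outside the excerpt is the elementary fact that a non-empty trap set of a finite digraph contains an attractor.
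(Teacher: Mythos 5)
Your proof is correct and follows essentially the same route as the paper, which simply cites parts (i)--(iii), (v), and (iv)+(vi) of \cref{thm:fixed-points} for the three claims respectively; you have merely made explicit the bijection, the injectivity via the ``at most one preimage attractor'' property, and the elementary fact that a non-empty trap set contains an attractor, all of which the paper leaves implicit.
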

\begin{proof}
    $(i)$ is a corollary of points $(i)$, $(ii)$ and $(iii)$ of~\cref{thm:fixed-points}.
    $(ii)$ is a consequence of point $(v)$ of~\cref{thm:fixed-points},
    and $(iii)$ of part $(vi)$ of~\cref{thm:fixed-points}.
\end{proof}

The inequalities of the corollary can be strict.
For point $(i)$, take $f(x_1,x_2)=(1,\bar{x}_1\vee\bar{x}_2)$. Then $S(f)=0$, $A(f)=A(f,2)=1$
and after removing the second component we have $\tf(x_1)=1$, $S(\tf)=1$, $A(\tf)=A(\tf,1)=0$.
For point $(ii)$, the map $f(x_1,x_2)=(\bar{x}_2,x_1)$ after removing variable $x_2$
gives $\tf(x_1)=\bar{x}_1$, $A(f)=A(f,1)=A(f,2)=0$, $A(\tf)=A(\tf,1)=1$.
For the third point, see \cref{ex:forward}.

\subsection{Interaction graph}

The following result is a consequence of the properties of the reduction method described in~\cite{naldi2009reduction}.
It states that the classical reduction cannot introduce new paths in the interaction graph:
if a path exists in the interaction graph of the reduced Boolean network, a path of the same sign must exist in the interaction graph of the original network.
We will prove here a generalized version for the case of the removal of potentially negatively autoregulated components.

\begin{proposition}\label{prop:ig-classical}
  If $G(f)$ has no loops at $v$, and $G(\tf)$ has a path from $j$ to $i$ of sign $s$,
  then $G(f)$ has a path from $j$ to $i$ of sign $s$.
\end{proposition}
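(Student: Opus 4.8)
The plan is to reduce the statement about paths to a statement about single edges, and then to prove the single-edge version by relating an edge $j \to i$ in $G(\tf)$ to edges in $G(f)$ via the maps $\S^0,\S^1$. First I would observe that a path of sign $s$ from $j$ to $i$ in $G(\tf)$ decomposes into consecutive edges, and that the sign of the path is the product of the signs of its edges. Since $\tf$ is a network on the vertex set $\tV = V\setminus\{v\}$, every vertex appearing in a path in $G(\tf)$ is distinct from $v$. Hence it suffices to prove the following edge lemma: if $G(\tf)$ has an edge from $k$ to $\ell$ of sign $s$ (with $k,\ell \neq v$), then $G(f)$ has a path of sign $s$ from $k$ to $\ell$ whose internal vertices, if any, equal $v$. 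Concatenating such paths along the edges of the original $G(\tf)$-path then yields a path of the correct total sign in $G(f)$, which is the desired conclusion.

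Next I would prove the edge lemma by unwinding the definition of the local interaction graph for $\tf$. An edge $k \to \ell$ of sign $s$ in $G(\tf)$ means there is some $\tilde{x}\in\B^{n-1}$ with $s = (\tf_\ell(\overline{\tilde{x}}^k) - \tf_\ell(\tilde{x}))(\overline{\tilde{x}}^k_k - \tilde{x}_k)$ and $\tf_\ell(\overline{\tilde{x}}^k)\neq\tf_\ell(\tilde{x})$. Using the case definition of $\tf_\ell$ in \cref{eq:casesftilde}, the value $\tf_\ell$ at each of $\tilde{x}$ and $\overline{\tilde{x}}^k$ is a conjunction or disjunction of $f_\ell(\S^0(\cdot))$ and $f_\ell(\S^1(\cdot))$. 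Because $\tf_\ell(\overline{\tilde x}^k) \neq \tf_\ell(\tilde x)$, at least one of the four values $f_\ell(\S^a(\tilde x))$, $f_\ell(\S^a(\overline{\tilde x}^k))$ ($a \in \{0,1\}$) must differ between the two points $\tilde x$ and $\overline{\tilde x}^k$; I would pick $a$ so that $f_\ell(\S^a(\overline{\tilde x}^k)) \neq f_\ell(\S^a(\tilde x))$ and track how the Boolean change propagates, so that the sign of the $\tf$-edge is matched by $(f_\ell(\S^a(\overline{\tilde x}^k)) - f_\ell(\S^a(\tilde x)))(\cdots)$. Now $\S^a(\tilde x)$ and $\S^a(\overline{\tilde x}^k)$ are two states of $\B^n$ that agree off $\{k,v\}$: they differ in component $k$, and may or may not differ in component $v$ (since $\S^a$ can assign different $v$-values to $\tilde x$ and $\overline{\tilde x}^k$). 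If they differ only in $k$, the difference in $f_\ell$ directly gives an edge $k \to \ell$ of sign $s$ in the local interaction graph $G(f)(\S^a(\tilde x))$. If they also differ in $v$, I would interpolate through the intermediate state that differs from $\S^a(\tilde x)$ only in $k$ (or only in $v$), factoring the single change in $f_\ell$ into a composition of an edge $k\to\ell$ and an edge $v\to\ell$ in $G(f)$, thus producing a length-two path $k \to v \to \ell$ (or $k\to\ell$ directly).

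The main obstacle, which I expect to absorb most of the care, is exactly this interpolation bookkeeping: correctly matching the \emph{sign} $s$ of the $\tf$-edge with the product of signs along the resulting $G(f)$-path, while handling the case distinction between $x_k=0$ and $x_k=1$ in the definition of $\tf$ (which switches conjunction and disjunction) and the case distinction between whether $\S^0$ and $\S^1$ agree at the relevant points. The conjunction/disjunction asymmetry is the delicate piece: one must verify that choosing the branch $a$ for which the relevant $f_\ell$-value changes always yields the \emph{same} sign as prescribed by $\tf_\ell$, rather than the opposite sign. I expect this to follow because the $\wedge$/$\vee$ choice in \cref{eq:casesftilde} is aligned with $x_\ell$, and a monotone Boolean operation preserves the direction of any single-argument change that actually moves the output; hence the sign of the change in $\tf_\ell$ equals the sign of the corresponding change in $f_\ell(\S^a(\cdot))$. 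Once the single-edge case is established with matching sign, the reduction to edges from the first paragraph completes the proof.
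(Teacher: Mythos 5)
Your proposal is correct and takes essentially the paper's route: the paper itself states this proposition without proof (deferring to the original reduction papers), but its own generalizations (\cref{prop:edges-in-ig,prop:ig}) are proved by exactly your strategy of reducing to single edges and interpolating through the intermediate state $\overline{\R^a(y)}^j$ to produce either $j\to i$ or $j\to v\to i$ in $G(f)$ with the matching sign, then concatenating. Note that the conjunction/disjunction bookkeeping you flag as the main obstacle is vacuous in this classical setting, since the absence of any loop at $v$ forces $\S^0=\S^1$ and hence $\tf_i = f_i\circ\S$.
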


The goal of this section is to generalize this result to negatively autoregulated components that are removed. However, as the following example shows we need to be careful here.  Indeed, if $G(f)$ has a negative loop at $v$,
then the conclusion of~\cref{prop:ig-classical} does not necessarily hold for paths that are negative loops.

\begin{example}\label{ex:neg-loop-case}
  The Boolean network $f(x_1,x_2)=(\bar{x}_2,\bar{x}_2)$ reduces to $\tf(x_1)=\bar{x}_1$ when the
  second variable is eliminated. The graph $G(\tf)$ has a negative loop in $1$,
  whereas $G(f)$ has no negative circuit containing $1$.
\end{example}

We first examine the negative loop case, then show that the result in~\cref{prop:ig-classical} can be extended
to the reduction method intruduced in this paper for the case of positive loops
and paths of length at least two.

\begin{proposition}\label{prop:ig-neg-loop}
  If $G(\tf)$ has a negative loop at $i$, then $G(f)$ has a negative cycle with support contained in $\{i,v\}$.
\end{proposition}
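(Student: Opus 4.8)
The plan is to translate the negative loop at $i$ in $G(\tf)$ into explicit evaluations of $\tf_i$, and then to unfold the definition of $\tf$ in~\cref{eq:casesftilde} to exhibit, in $G(f)$, either a negative loop (at $i$ or at $v$) or a negative cycle of length two on $\{i,v\}$. Recall that the sign of an edge $j\to i$ at a state $x$ is $(f_i(\bar{x}^j)-f_i(x))(\bar{x}^j_j-x_j)$; all the edges I produce may be read off at possibly different states, since $G(f)$ is the union of the local graphs.

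First I would record what the hypothesis gives. A negative loop at $i$ in $G(\tf)$ furnishes a state at which $\tf_i$ decreases as $x_i$ increases; relabelling so that the value $0$ of the $i$-th coordinate is the one with the larger output, I obtain $p\in\B^{n-1}$ with $p_i=0$, together with $q:=\bar{p}^i$, such that $\tf_i(p)=1$ and $\tf_i(q)=0$. Next I would dispose of the case where $v$ is autoregulated at a state projecting to $p$ or to $q$, i.e. $\S^0(p)\neq\S^1(p)$ or $\S^0(q)\neq\S^1(q)$. In that case \cref{rmk:R0noteqR1} tells us $v$ is \emph{negatively} autoregulated at the corresponding state, so $f_v$ drops from $1$ to $0$ as its own coordinate goes from $0$ to $1$; this is a negative loop at $v$ in $G(f)$, with support $\{v\}\subseteq\{i,v\}$, and we are done.

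So I may assume $\S^0(p)=\S^1(p)=:s$ and $\S^0(q)=\S^1(q)=:t$. Then both branches of~\cref{eq:casesftilde} collapse, giving $f_i(s)=\tf_i(p)=1$ and $f_i(t)=\tf_i(q)=0$, while $s_i=p_i=0$, $t_i=q_i=1$, and $s,t$ agree on every coordinate outside $\{i,v\}$ because $\pi(s)=p$ and $\pi(t)=q$ differ only in $i$. If $s_v=t_v$ then $t=\bar{s}^i$, and $f_i(s)=1$, $f_i(\bar{s}^i)=0$ give a negative loop at $i$. If $s_v\neq t_v$, I would inspect $f_i(\bar{s}^i)$: when it equals $0$ this again yields a negative loop at $i$ at the state $s$; when it equals $1$, then $\bar{s}^i$ and $t$ differ only in $v$, so flipping $v$ in $\bar{s}^i$ sends $f_i$ from $1$ to $0$, producing an edge $v\to i$ of sign $s_v-t_v$. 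To close this into a cycle I would produce the reciprocal edge $i\to v$: since $v$ is not autoregulated over $p$ or $q$, $f_v$ is constant in its own argument there, whence $f_v(s)=s_v$ and $f_v(\bar{s}^i)=t_v$; as $s_v\neq t_v$ this is an edge $i\to v$ of sign $t_v-s_v$. The product of the two signs is $(s_v-t_v)(t_v-s_v)=-1$, so the length-two cycle on $\{i,v\}$ is negative.

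The sign bookkeeping is routine once the relevant states are identified. The main obstacle is the final subcase ($s_v\neq t_v$ with $f_i(\bar{s}^i)=1$), where no single loop is available and one must assemble two distinct edges — the edge $v\to i$ from the change in $f_i$ and the edge $i\to v$ from the non-constancy of $f_v$ in $i$ — and check their signs multiply to $-1$. The care needed lies in exploiting the hypothesis that $v$ is unautoregulated over both $p$ and $q$ (so that $s,t$ are genuinely single lifted states and $f_v$ is coordinate-independent there), which is precisely what makes both edges available.
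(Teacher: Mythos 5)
Your proof is correct and takes essentially the same route as the paper's: both split off the case where $v$ is negatively autoregulated (yielding a negative loop at $v$ via \cref{rmk:R0noteqR1}), and otherwise telescope the change in $f_i$ between the two lifted states through the intermediate state obtained by flipping $i$, producing either a negative loop at $i$ or a negative two-cycle on $\{i,v\}$. The only difference is organizational: you dispatch the autoregulated case up front so that $\S^0$ and $\S^1$ collapse to single representatives $s$ and $t$, whereas the paper fixes a lift $y$ and extracts the negative loop at $v$ only in the residual case where no $a\in\{0,1\}$ gives both $f_i(\R^a(y))=1$ and $f_i(\R^a(\bar{y}^i))=0$; the sign bookkeeping is the same.
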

\begin{proof}
  Take $x$ such that $G(\tf)$ has a negative loop at $x$ and $x_i=0$, so that $\tf_i(x)=1$ and $\tf_i(\bar{x}^i)=0$.
  Consider $y\in\B^n$ such that $\pi(y)=x$, then, by definition of $\tf$, either $f_i(\R^0(y))$ or $f_i(\R^1(y))$ is equal to $1$,
  and either $f_i(\R^0(\bar{y}^i))$ or $f_i(\R^1(\bar{y}^i))$ is equal to $0$.
  Consider two cases.

  $(1)$ Suppose that there exists $a\in\{0,1\}$ such that $f_i(\R^a(y))=1$ and $f_i(\R^a(\bar{y}^i))=0$.
  Then
  \begin{equation*}
    \begin{aligned}
      (-1)\cdot(\bar{y}^i_i-y_i)&=\tf_i(\bar{x}^i)-\tf_i(x)=f_i(\R^a(\bar{y}^i))-f_i(\R^a(y)) \\
                               &=f_i(\R^a(\bar{y}^i))-f_i(\overline{\R^a(y)}^i)+f_i(\overline{\R^a(y)}^i)-f_i(\R^a(y)).
    \end{aligned}
  \end{equation*}
  If $f_i(\overline{\R^a(y)}^i)=0$, then there is a negative loop with support $\{i\}$ at $\R^a(y)$.
  Otherwise, we have $\R^a(\bar{y}^i)\neq\overline{\R^a(y)}^i$ and $f_v(\overline{y^{v=a}}^i)\neq f_v(y^{v=a})$.
  Hence
  \begin{equation*}
    -1 = \frac{f_i(\R^a(\bar{y}^i))-f_i(\R^a(y))}{\bar{y}^i_i-y_i} =
         \frac{f_i(\R^a(\bar{y}^i))-f_i(\overline{\R^a(y)}^i)}{f_v(\overline{y^{v=a}}^i)-f_v(y^{v=a})}\frac{f_v(\overline{y^{v=a}}^i)-f_v(y^{v=a})}{\bar{y}^i_i-y_i},
  \end{equation*}
  that is, there is a negative cycle in $G(f)$ with support $\{i,v\}$.

  $(2)$ Suppose now that, for $a=0$ and $a=1$, if $f_i(\R^a(y))=1$ then $f_i(\R^a(\bar{y}^i))=1$,
  and if $f_i(\R^a(\bar{y}^i))=0$ then $f_i(\R^a(y))=0$.
  Then we must have $f_i(\R^0(y))\neq f_i(\R^1(y))$ and $f_i(\R^0(\bar{y}^i))\neq f_i(\R^1(\bar{y}^i))$.
  In particular, $\R^0(y)\neq\R^1(y)$ and by~\cref{rmk:R0noteqR1} there is a negative loop at $v$ in $G(f)$.
\end{proof}

\begin{proposition}\label{prop:ig-pos-loop}
  If $G(\tf)$ has a positive loop at $i$, then $G(f)$ has either a positive loop at $i$ or a positive cycle with support $\{i,v\}$.
\end{proposition}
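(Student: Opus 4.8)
The plan is to mirror the proof of \cref{prop:ig-neg-loop}, exploiting the fact that for a positive loop the disjunction/conjunction in the definition of $\tf$ force a cleaner situation than in the negative case. First I would reduce to a local statement: a positive loop at $i$ in $G(\tf)$ provides a state $x\in\B^{n-1}$ with a positive loop at $i$ in $G(\tf)(x)$, and since the sign of a loop is unchanged when its coordinate is flipped I may assume $x_i=0$, so that $\tf_i(x)=0$ and $\tf_i(\bar{x}^i)=1$. Unfolding \cref{eq:casesftilde}: because $x_i=0$ the value $\tf_i(x)$ is a disjunction, so $\tf_i(x)=0$ forces $f_i(\S^0(x))=f_i(\S^1(x))=0$; because $\bar{x}^i_i=1$ the value $\tf_i(\bar{x}^i)$ is a conjunction, so $\tf_i(\bar{x}^i)=1$ forces $f_i(\S^0(\bar{x}^i))=f_i(\S^1(\bar{x}^i))=1$. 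This is where the positive-loop case is simpler than \cref{prop:ig-neg-loop}: there the ``mixed'' case $(2)$ could arise, whereas here both values of $a$ satisfy $f_i(\R^a(y))=0$ and $f_i(\R^a(\bar{y}^i))=1$ simultaneously, so no case distinction on $a$ is needed.

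Fixing any $y\in\pi^{-1}(x)$ and writing $\R^a(y)=\S^a(x)$, $\R^a(\bar{y}^i)=\S^a(\bar{x}^i)$, I would take $a=0$ and split on the value of $f_i(\overline{\R^0(y)}^i)$ (note $\R^0(y)_i=y_i=0$ since $i\neq v$). If $f_i(\overline{\R^0(y)}^i)=1$, then at the state $\R^0(y)$ one computes $(f_i(\overline{\R^0(y)}^i)-f_i(\R^0(y)))(\overline{\R^0(y)}^i_i-\R^0(y)_i)=(1-0)(1-0)=+1$, so $G(f)$ has a positive loop at $i$ and we are done.

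In the remaining case $f_i(\overline{\R^0(y)}^i)=0$ we have $\overline{\R^0(y)}^i\neq\R^0(\bar{y}^i)$; since these two states can differ only in component $v$, this forces $f_v(y^{v=0})\neq f_v(\overline{y^{v=0}}^i)$, i.e. $v$ genuinely depends on $i$. Using $f_i(\R^0(y))=f_i(\overline{\R^0(y)}^i)=0$ to rewrite $f_i(\R^0(\bar{y}^i))-f_i(\R^0(y))$ as $f_i(\R^0(\bar{y}^i))-f_i(\overline{\R^0(y)}^i)$, I would factor
\[
  +1=\frac{f_i(\R^0(\bar{y}^i))-f_i(\overline{\R^0(y)}^i)}{f_v(\overline{y^{v=0}}^i)-f_v(y^{v=0})}\cdot\frac{f_v(\overline{y^{v=0}}^i)-f_v(y^{v=0})}{\bar{y}^i_i-y_i},
\]
exactly as in \cref{prop:ig-neg-loop}. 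The first factor is the sign of an edge $v\to i$ (the states $\R^0(\bar{y}^i)$ and $\overline{\R^0(y)}^i$ differ only in $v$), the second the sign of an edge $i\to v$ (evaluated at $y^{v=0}$), both nonzero by the inequalities just established; their product being $+1$ yields a positive cycle in $G(f)$ with support $\{i,v\}$. The one step requiring care, as in the negative-loop proposition, is checking that this telescoping factorization is legitimate — namely that $\R^0(\bar{y}^i)$ and $\overline{\R^0(y)}^i$ differ in the single coordinate $v$, so that the first ratio is a genuine local interaction-graph sign, and that both denominators are nonzero — which is precisely what the case assumption $f_i(\overline{\R^0(y)}^i)=0$ guarantees. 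This is the main (and essentially the only) obstacle; the rest is the bookkeeping of unfolding \cref{eq:casesftilde}.
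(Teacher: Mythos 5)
Your proposal is correct and follows essentially the same route as the paper's own proof: normalise to $x_i=0$, read off $f_i(\R^a(y))=0$ and $f_i(\R^a(\bar y^i))=1$ for both $a$ from \cref{eq:casesftilde}, then split on the value of $f_i(\overline{\R^0(y)}^i)$ to obtain either a positive loop at $i$ in $G(f)(\R^0(y))$ or the telescoping factorisation through $v$ giving a positive cycle with support $\{i,v\}$. Your added observation that the mixed case $(2)$ of \cref{prop:ig-neg-loop} cannot occur here is accurate and explains why the paper's proof needs no case distinction on $a$.
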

\begin{proof}
  Take $x$ such that $G(\tf)(x)$ has a positive loop at $i$ and w.l.o.g. $x_i=0$, so that $\tf_i(x)=0$ and $\tf_i(\bar{x}^i)=1$.
  Consider $y\in\B^n$ such that $\pi(y)=x$, then, by definition of $\tf$, $f_i(\R^0(y))=f_i(\R^1(y))=0$ and $f_i(\R^0(\bar{y}^i))=f_i(\R^1(\bar{y}^i))=1$.
  We can write
  \begin{equation*}
    \bar{y}^i_i-y_i=f_i(\R^0(\bar{y}^i))-f_i(\R^0(y))=f_i(\R^0(\bar{y}^i))-f_i(\overline{\R^0(y)}^i)+f_i(\overline{\R^0(y)}^i)-f_i(\R^0(y)).
  \end{equation*}
  If $f_i(\overline{\R^0(y)}^i)=1$, then the equality can be simplified to $\bar{y}^i_i-y_i=f_i(\overline{\R^0(y)}^i)-f_i(\R^0(y))$ and there is a positive loop at $i$ in $G(f)(\R^0(y))$.
  Otherwise, we have $\R^0(\bar{y}^i)\neq\overline{\R^0(y)}^i$ and $f_v(\overline{y^{v=0}}^i)\neq f_v(y^{v=0})$.
  Hence
  \begin{equation*}
    1 = \frac{f_i(\R^0(\bar{y}^i))-f_i(\R^0(y))}{\bar{y}^i_i-y_i} =
        \frac{f_i(\R^0(\bar{y}^i))-f_i(\overline{\R^0(y)}^i)}{f_v(\overline{y^{v=0}}^i)-f_v(y^{v=0})}\frac{f_v(\overline{y^{v=0}}^i)-f_v(y^{v=0})}{\bar{y}^i_i-y_i},
  \end{equation*}
  that is, there is a positive cycle in $G(f)$ with support $\{i,v\}$.
\end{proof}

\begin{proposition}\label{prop:edges-in-ig}
  If $G(\tf)$ has an edge from $j$ to $i$ of positive (resp. negative) sign and $i\neq j$,
  then $G(f)$ has an edge $j\to i$ or a path $j\to v\to i$ of positive (resp. negative) sign.
\end{proposition}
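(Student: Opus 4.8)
The plan is to mirror the arguments used for the loop cases in \cref{prop:ig-neg-loop} and \cref{prop:ig-pos-loop}, now producing either a direct edge or a path of length two. Suppose $G(\tf)$ has an edge from $j$ to $i$ of sign $s$ with $i\neq j$; then both $i,j\in V\setminus\{v\}$, and there is a state $x\in\B^{n-1}$ with $s=(\tf_i(\bar{x}^j)-\tf_i(x))(\bar{x}^j_j-x_j)$. Without loss of generality I would take $x_j=0$ (otherwise replace $x$ by $\bar{x}^j$, which leaves the sign unchanged), so that $s=\tf_i(\bar{x}^j)-\tf_i(x)$. The first step is to reduce to a comparison involving a single superscript $a$. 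Since $i\neq j$ we have $x_i=\bar{x}^j_i$, so the definition~\cref{eq:casesftilde} expresses both $\tf_i(x)$ and $\tf_i(\bar{x}^j)$ as the same Boolean connective ($\wedge$ if $x_i=1$, $\vee$ if $x_i=0$) applied to the pairs $f_i(\S^0(\cdot)),f_i(\S^1(\cdot))$. A short case analysis on $x_i$ and on the sign of $s$ then yields an $a\in\{0,1\}$ with $f_i(\S^a(\bar{x}^j))-f_i(\S^a(x))=s$.

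Write $y=\S^a(x)$ and $y'=\S^a(\bar{x}^j)$, so that $f_i(y')-f_i(y)=s$, $y_j=0$, $y'_j=1$, and $y,y'$ agree outside $\{j,v\}$. Their $v$-components are $y_v=f_v(w)$ and $y'_v=f_v(\bar{w}^j)$, where $w$ is the lift of $x$ to $\B^n$ with $v$-component $a$ (so the lift of $\bar{x}^j$ is $\bar{w}^j$, as $j\neq v$). I would then split into two cases according to whether $y$ and $y'$ differ in component $v$.

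If $y_v=y'_v$, then $y'=\bar{y}^j$, and since $y'_j-y_j=1$ there is a direct edge $j\to i$ of sign $s$ in $G(f)(y)$. If instead $y_v\neq y'_v$, then $f_v(w)\neq f_v(\bar{w}^j)$, so $G(f)(w)$ carries an edge $j\to v$ of sign $s_{jv}=f_v(\bar{w}^j)-f_v(w)\in\{-1,1\}$ (using $w_j=0$), and moreover $y'_v-y_v=s_{jv}$. Introducing the intermediate state $z=\bar{y}^v$, which differs from $y$ only in $v$ and from $y'$ only in $j$, I would decompose $s=f_i(y')-f_i(y)=\bigl(f_i(y')-f_i(z)\bigr)+\bigl(f_i(z)-f_i(y)\bigr)$. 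Each summand lies in $\{-1,0,1\}$ and they sum to $s=\pm1$, so exactly one equals $s$ and the other vanishes. If $f_i(y')-f_i(z)=s$, the pair $z,y'$ gives a direct edge $j\to i$ of sign $s$ in $G(f)(z)$; if $f_i(z)-f_i(y)=s$, the pair $y,z$ gives an edge $v\to i$ of sign $s(y'_v-y_v)=s\,s_{jv}$ in $G(f)(y)$, which combined with $j\to v$ produces a path $j\to v\to i$ of sign $s_{jv}\cdot s\,s_{jv}=s$.

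The routine part is the Boolean case analysis producing the superscript $a$; the step requiring care — and the one I expect to be the main obstacle — is the sign bookkeeping in the two-component case. The crucial observation making it work is that $w_j=0$ forces $y'_v-y_v$ to coincide with the sign $s_{jv}$ of the edge $j\to v$, so that when the change in $f_i$ is carried by the flip in component $v$, the product of the two edge signs telescopes to $s\,s_{jv}^2=s$, exactly matching the required sign of the path.
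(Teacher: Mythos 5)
Your proof is correct and follows essentially the same route as the paper's: both arguments reduce, via the $\wedge$/$\vee$ case analysis on \cref{eq:casesftilde}, to a single index $a$ with $f_i(\S^a(\bar x^j))-f_i(\S^a(x))=s\cdot(\bar x^j_j-x_j)$, and then telescope this difference through an intermediate corner of the square spanned by the $j$- and $v$-flips, yielding a direct edge $j\to i$ when the jump occurs on the $j$-flip and a path $j\to v\to i$ of sign $s_{jv}\cdot s\,s_{jv}=s$ when it occurs on the $v$-flip. The only (immaterial) difference is the choice of intermediate corner: you pass through $\overline{\S^a(x)}^v$, while the paper passes through $\overline{\S^a(x)}^j$.
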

\begin{proof}
  Suppose that $\tf_i(\bar{x}^j)\neq\tf_i(x)$.
  Take $y\in\B^n$ such that $\pi(y)=x$. From~\cref{eq:casesftilde} we have
  \begin{equation*}
  \tf_i(x)=\begin{cases}
             f_i(\R^0(y))\wedge f_i(\R^1(y)) & \text{ if }x_i=1,\\
             f_i(\R^0(y))\vee f_i(\R^1(y)) & \text{ if }x_i=0,
           \end{cases}
  \end{equation*}
  and
  \begin{equation*}
  \tf_i(\bar{x}^j)=\begin{cases}
             f_i(\R^0(\bar{y}^j))\wedge f_i(\R^1(\bar{y}^j)) & \text{ if }\bar{x}^j_i=x_i=1,\\
             f_i(\R^0(\bar{y}^j))\vee f_i(\R^1(\bar{y}^j)) & \text{ if }\bar{x}^j_i=x_i=0.
           \end{cases}
  \end{equation*}
  Since $i\neq j$ and $\tf_i(\bar{x}^j)\neq\tf_i(x)$, we must have either $f_i(\R^0(y))=f_i(\R^1(y))$ or $f_i(\R^0(\bar{y}^j))=f_i(\R^1(\bar{y}^j))$.
  Consider the case where $f_i(\R^0(\bar{y}^j))=f_i(\R^1(\bar{y}^j))$,
  the case $f_i(\R^0(y))=f_i(\R^1(y))$ being symmetrical. Then $\tf_i(\bar{x}^j)=f_i(\R^0(\bar{y}^j))=f_i(\R^1(\bar{y}^j))$.

  Take $a\in\{0,1\}$ such that $\tf_i(x)=f_i(\R^a(y))$.
  Then we can write
    \begin{equation*}
      \begin{aligned}
        0\neq s \cdot (\bar{y}^j_j-y_j) & = f_i(\R^a(\bar{y}^j))-f_i(\R^a(y)) \\
                                        & = f_i(\R^a(\bar{y}^j))-f_i(\overline{\R^a(y)}^j)+f_i(\overline{\R^a(y)}^j)-f_i(\R^a(y)),
      \end{aligned}
    \end{equation*}
    where $s$ is the sign of the edge $j\to i$ in $G(\tf)$.
    If $f_i(\overline{\R^a(y)}^j)-f_i(\R^a(y))=s \cdot (\bar{y}^j_j-y_j)$, then there is an edge $j\to i$ in $G(f)$ with the required sign.

    Suppose that $f_i(\overline{\R^a(y)}^j)-f_i(\R^a(y))=0$, then $f_i(\R^a(\bar{y}^j))-f_i(\overline{\R^a(y)}^j)=s \cdot (\bar{y}^j_j-y_j)$
    and $f_v(\overline{y^{v=a}}^j)\neq f_v(y^{v=a})$.
    Therefore
    \begin{equation*}
      \begin{aligned}
        \frac{f_i(\R^a(\bar{y}^j))-f_i(\overline{\R^a(y)}^j)}{f_v(\overline{y^{v=a}}^j)-f_v(y^{v=a})}\cdot
        \frac{f_v(\overline{y^{v=a}}^j)-f_v(y^{v=a})}{\overline{y^{v=a}}^j_j-y^{v=a}_j}=
        \frac{f_i(\R^a(\bar{y}^j))-f_i(\R^a(y))}{\bar{y}^j_j-y_j}=s,
      \end{aligned}
    \end{equation*}
    and there is a path $j\to v\to i$ in $G(f)$ with the required sign.
\end{proof}

The following is a corollary of the previous proposition.

\begin{proposition}\label{prop:ig}
  If $G(\tf)$ has an elementary path from $j$ to $i$ of sign $s$ that is not a loop,
  then $G(f)$ has an elementary path from $j$ to $i$ of sign $s$.
\end{proposition}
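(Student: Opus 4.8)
The plan is to build the path in $G(f)$ edge by edge from the elementary path in $G(\tf)$, using \cref{prop:edges-in-ig} to translate each individual edge, and then argue that the concatenation of these local translations yields a genuine elementary path of the correct sign. Concretely, suppose $G(\tf)$ has an elementary path $j = j_0 \to j_1 \to \dots \to j_k = i$ of sign $s$ that is not a loop, so in particular $j \neq i$ and $k \geq 1$. First I would apply \cref{prop:edges-in-ig} to each edge $j_{\ell-1} \to j_\ell$ (these are genuine edges between distinct vertices, since an elementary path that is not a loop has distinct consecutive vertices): each such edge lifts, with the same sign, to either a direct edge $j_{\ell-1} \to j_\ell$ in $G(f)$, or a length-two path $j_{\ell-1} \to v \to j_\ell$ passing through the eliminated vertex $v$. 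Splicing these pieces together produces a walk in $G(f)$ from $j$ to $i$ whose sign is the product of the signs of the lifted edges, which equals $s$ by multiplicativity of signs along a path.

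The substance of the argument is showing that this walk can be taken to be an \emph{elementary} path, rather than merely a walk. Here I would rely on the fact that the original path is elementary in $G(\tf)$, so the vertices $j_0,\dots,j_k$ are distinct (apart from the endpoints, but those coincide only in the loop case, which is excluded). The only new vertex that the lifted walk can visit is $v$, which is absent from $\{j_0,\dots,j_k\}$. Thus the support of the lifted walk is contained in $\{j_0,\dots,j_k,v\}$, and the only possible repetition is of $v$: it may appear once for each edge that required the length-two lift. To obtain an elementary path I would take the lifted walk and extract an elementary path between its endpoints (a standard observation: any walk from $a$ to $b$ in a directed graph contains an elementary path from $a$ to $b$ as a sub-walk). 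The one subtlety is the sign: extracting a sub-walk amounts to deleting closed subwalks (cycles), so the sign of the extracted elementary path need not equal $s$ unless every deleted cycle has positive sign.

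The hard part, therefore, is the sign bookkeeping around repeated occurrences of $v$. I expect the cleanest route is to avoid generic cycle-deletion and instead argue directly about the structure of repetitions. Because $v$ is the only vertex that can repeat, any closed subwalk created by shortcutting must pass through $v$ and consist of pieces of the form $v \to j_\ell \to v$ (or longer segments along the original path returning to $v$); one would need to check these carry positive sign, or alternatively reorganize the lift so that $v$ is used at most once. A more robust approach that sidesteps the difficulty: since the path in $G(\tf)$ is elementary and not a loop, I would instead observe that \cref{prop:edges-in-ig} already gives, for the \emph{entire} path, a concatenation whose only non-path vertex is $v$, and then note that $G(f)$ having \emph{a} path of the required sign from $j$ to $i$ follows immediately from the concatenated walk once we confirm it contains no sign-reversing closed subwalk — which, given that $v$ appears only as an internal relay and $i \neq j$, reduces to a finite case check. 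I anticipate this sign-consistency at the relay vertex $v$ to be the main obstacle, and I would handle it by exploiting the explicit sign formulas for the lifted edges $j_{\ell-1}\to v$ and $v \to j_\ell$ appearing in the proof of \cref{prop:edges-in-ig}, rather than by an abstract cycle-sign argument.
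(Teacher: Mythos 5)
Your overall strategy coincides with the paper's: the paper offers no argument beyond declaring the statement ``a corollary of the previous proposition'', i.e.\ of \cref{prop:edges-in-ig}, which is precisely the edge-by-edge lifting you describe. You go further than the paper in that you correctly isolate where this lifting is insufficient: if two or more edges of the path are realised in $G(f)$ only as length-two detours through $v$, the concatenation is a walk in which $v$ repeats, and extracting an elementary path from it deletes a closed subwalk through $v$ whose sign need not be positive. You flag this sign bookkeeping as ``the main obstacle'' and propose to settle it by a finite case check on the signs of the edges adjacent to $v$, but you do not carry that check out, so the proof is incomplete exactly at the step that carries all the content.

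The obstacle is not merely technical: the case check cannot succeed, because the statement fails in that configuration. Take $n=3$, $v=3$, and $f(x_1,x_2,x_3)=(\bar{x}_3,\bar{x}_3,x_1\vee x_2)$. Then $G(f)$ consists of the edges $1\xrightarrow{+1}3$, $2\xrightarrow{+1}3$, $3\xrightarrow{-1}1$, $3\xrightarrow{-1}2$; its only cycles are the two negative cycles $1\to 3\to 1$ and $2\to 3\to 2$, so in particular there is no positive loop at $v$ and the reduction applies, giving $\tf(x_1,x_2)=(\overline{x_1\vee x_2},\overline{x_1\vee x_2})$. Now $G(\tf)$ contains negative edges $1\to 2$ and $2\to 1$, hence the elementary path $1\to 2\to 1$ of sign $+1$, which is a cycle of length two and not a loop; yet the only elementary path from $1$ to $1$ in $G(f)$ is $1\to 3\to 1$, of sign $-1$. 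Here both edges of the path lift only through $v$, the concatenated closed walk $1\to v\to 2\to v\to 1$ is positive but not elementary, and every elementary path extracted from it is negative --- exactly the failure mode you anticipated. (The same example shows that $\emptyset$ is a positive feedback vertex set for $G(f)$ but not for $G(\tf)$, so the issue propagates to \cref{lem:pfvs}.) No completion of your argument, and no justification of the paper's one-line claim, is possible without an extra hypothesis guaranteeing that at most one edge of the path requires the detour through $v$ --- for instance that $v$ has at most one regulator, or at most one target, among the vertices of the path.
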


\begin{example}
  Not all edges in $G(f)$ are preserved by the reduction.
  For instance, the map $f(x_1,x_2)=(x_1\wedge \bar{x}_2,x_1\wedge \bar{x}_2)$ reduces,
  after elimination of the second variable, to the constant function $\tilde{f}(x_1)=0$.
\end{example}

\subsection{Application: positive feedback vertex sets and bound on the number of attractors}

Using the properties of the variable elimination method introduced in this paper,
we give an alternative proof for a bound on the number of attractors of asynchronous Boolean networks in terms of positive feedback vertex sets of the interaction graph. The result can be found in~\cite{richard2009positive}.
Recall that a \emph{positive feedback vertex set} of a signed directed graph $G$
is a set of vertices that intersects every positive cycle of $G$.

The idea is to show that the size of the minimum positive feedback vertex set does not increase with the reduction. After some reduction steps a network with $|I|$ components is obtained, giving the upper bound $2^{|I|}$ on the number of attractors and fixed points. For the proof we use the following lemma.
\begin{lemma}\label{lem:pfvs}
  Suppose that $I$ is a positive feedback vertex set for $f$ that does not contain $v$.
  Then $G(f)$ has no positive loop at $v$,
  and $I$ is a positive feedback vertex set for the network $\tf$ obtained by eliminating $v$.
\end{lemma}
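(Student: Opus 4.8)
The plan is to establish the two assertions separately. The first is immediate from the definition of a positive feedback vertex set, while the second reduces to showing that every positive cycle of $G(\tf)$ meets $I$, which I would handle through the interaction-graph propositions already proved.

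For the claim that $G(f)$ has no positive loop at $v$, I would argue by contradiction. A positive loop at $v$ is, by definition, a positive cycle of $G(f)$ whose support is the single vertex $\{v\}$. Since $I$ is a positive feedback vertex set for $f$, it must intersect this cycle, forcing $v\in I$, contrary to the hypothesis $v\notin I$. Hence no positive loop at $v$ exists; in particular the reduced network $\tf$ is well defined, so the second assertion makes sense.

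For the second assertion I would take an arbitrary positive cycle $C$ of $G(\tf)$ and show $\mathrm{supp}(C)\cap I\neq\emptyset$, distinguishing whether $C$ is a loop. If $C$ is a loop at a vertex $i$, then by \cref{prop:ig-pos-loop} the graph $G(f)$ contains either a positive loop at $i$ or a positive cycle with support $\{i,v\}$; in both cases this is a positive cycle of $G(f)$, so it meets $I$, and since $v\notin I$ we conclude $i\in I$, i.e. $C$ meets $I$. If instead $C$ has length at least two, I would view it as an elementary positive path from a vertex $j$ to itself and invoke \cref{prop:ig} to obtain an elementary positive cycle $C'$ of $G(f)$ through $j$. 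Because such a cycle is produced by lifting $C$ edge by edge via \cref{prop:edges-in-ig} — each edge of $C$ being replaced either by an edge of $G(f)$ or by a length-two path through the single extra vertex $v$, with the same sign — its support satisfies $\mathrm{supp}(C')\subseteq\mathrm{supp}(C)\cup\{v\}$. As $C'$ is a positive cycle of $G(f)$ it meets $I$, and since $v\notin I$ the intersection point lies in $\mathrm{supp}(C)$; hence $C$ meets $I$. Having shown that every positive cycle of $G(\tf)$ meets $I$, we conclude that $I$ is a positive feedback vertex set for $\tf$.

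The delicate point I expect is precisely the support control in the non-loop case: guaranteeing that the positive cycle furnished by \cref{prop:ig} can be kept inside the lifted object, so that its support does not escape $\mathrm{supp}(C)\cup\{v\}$. This is exactly what \cref{prop:edges-in-ig,prop:ig} are designed to provide, since they translate each $G(\tf)$-edge into an edge or a path through $v$ of the same sign. The genuine subtlety lies in preserving the sign when passing from the resulting closed walk to an elementary cycle: a positive closed walk need not contain a positive cycle (a figure-eight of two negative cycles sharing $v$ is positive yet contains no positive cycle), so a naive walk-decomposition argument is insufficient, and the sign-preserving path statement of \cref{prop:ig} is what makes the step valid. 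Once the support bound is secured, the conclusion follows purely from the definition of positive feedback vertex set together with the assumption $v\notin I$.
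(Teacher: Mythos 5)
Your proof follows essentially the same route as the paper's: a positive cycle of $G(\tf)$ with support $C$ is lifted via \cref{prop:ig-pos-loop} (for loops) or \cref{prop:edges-in-ig,prop:ig} (for longer cycles) to a positive cycle of $G(f)$ with support contained in $C\cup\{v\}$, which must then meet $I$ at a vertex other than $v$. You are merely more explicit than the paper about the (trivial) first claim and about the support and elementarity bookkeeping hidden in \cref{prop:ig}.
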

\begin{proof}
  Take a positive cycle in $G(\tf)$ with support $C$. By~\cref{prop:ig-pos-loop,prop:ig} there exists
  a positive cycle in $G(f)$ with support in $C\cup\{v\}$.
  Since $I$ is a positive feedback vertex set and $v$ is not in $I$, we have $I\cap C\neq\emptyset$, as required.
\end{proof}

\begin{theorem}\label{thm:bound-attrs}(\cite{richard2009positive})
  Consider $f\colon\B^n\to\B^n$ and suppose that $I$ is a positive feedback vertex set of $G(f)$.
  Then $AD(f)$ has at most $2^{|I|}$ attractors.
\end{theorem}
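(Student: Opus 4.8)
The plan is to reduce $f$, by eliminating one at a time every component $v\notin I$, to a Boolean network supported on exactly the vertices of $I$, and then to exploit the fact, established in \cref{cor:bound}, that the total number of attractors never decreases under a single elimination step. The bound then drops out because a network on $|I|$ components cannot have more attractors than it has states.

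First I would set up an induction on the number of vertices lying outside $I$. The inductive invariant is that the current network $g$ admits $I$ as a positive feedback vertex set. Given this, pick any component $v\notin I$ still present. By \cref{lem:pfvs}, the absence of $v$ from the positive feedback vertex set $I$ forces $G(g)$ to have no positive loop at $v$, which is precisely the hypothesis needed to apply the elimination method of \cref{sec:generalisation-neg-loops} to $v$. The same lemma guarantees that $I$ remains a positive feedback vertex set for the reduced network $\tilde g$, so the invariant is preserved. After $n-|I|$ steps we therefore obtain a well-defined Boolean network $h$ on the $|I|$ components of $I$, having eliminated no positively autoregulated vertex along the way.

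Next I would chain the inequality of \cref{cor:bound} along this sequence of reductions. Each elimination satisfies $S(g)+A(g)\le S(\tilde g)+A(\tilde g)$, so composing the steps yields $S(f)+A(f)\le S(h)+A(h)$; that is, the total number of attractors of $AD(f)$ is bounded by the total number of attractors of $AD(h)$. Finally, because attractors are minimal trap sets they are pairwise disjoint: the intersection of two trap sets is again a trap set, so two minimal ones that meet must coincide. Hence a network on $|I|$ components, whose state space $\B^{|I|}$ has $2^{|I|}$ elements, has at most $2^{|I|}$ attractors in total. Combining the two bounds gives $S(f)+A(f)\le 2^{|I|}$, which is the claim.

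The point requiring the most care is the inductive bookkeeping rather than any computation: one must check that \cref{lem:pfvs} can legitimately be reapplied after each reduction, i.e.\ that $I$ is still a positive feedback vertex set for the new network and that the next chosen vertex outside $I$ genuinely has no positive loop in the updated interaction graph. Both facts are furnished directly by \cref{lem:pfvs}, so no positively autoregulated component is ever eliminated and the whole process is well defined; the remaining ingredients—monotonicity of the attractor count and the elementary disjointness of attractors—are then immediate.
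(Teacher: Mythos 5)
Your proposal is correct and follows essentially the same route as the paper: iteratively eliminate the vertices outside $I$, using \cref{lem:pfvs} to justify each step and to keep $I$ a positive feedback vertex set, then chain the inequality of \cref{cor:fixed-points-and-2}~\ref{cor:bound} and bound the attractor count of the final network by its number of states. The only cosmetic difference is that the paper phrases the termination in terms of minimum-size positive feedback vertex sets, whereas you fix the given set $I$ throughout, which is an equally valid (and arguably slightly more direct) bookkeeping.
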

\begin{proof}
  We can apply the reduction method described in~\cref{eq:casesftilde}
  eliminating vertices that do not belong to a positive feedback vertex set of minimum size,
  until a network $\tf\colon\B^m\to\B^m$ is obtained such that all positive
  feedback vertex sets have size $m$.
  Since variables that do not belong to minimum positive vertex sets are eliminated,
  by~\cref{lem:pfvs} the size of minimum positive feedback vertex sets cannot increase with the reduction.
  The conclusion follows from the fact that the number of attractors of $\tf$ is greater or equal to the number of attractors of $f$ (\cref{cor:fixed-points-and-2}~\ref{cor:bound}).
\end{proof}

\section{Preservation of cyclic attractors}\label{sec:attractor-preservation}
We now turn our attention to a different problem. A critical question when using network reduction concerns the preservation or loss of information.
The identification of properties that are preserved can help clarify the accuracy of information that can be obtained from the analysis of the reduced network in lieu of the full network.
When studying a network model one should also consider that, even if no network reduction is explicitly applied, implicit reduction steps might have been introduced in the construction of the model, for instance when certain components are merged into one, or signaling pathways are simplified.
In the analysis of Boolean networks special importance is given to the attractors. A natural question is therefore: under which conditions are attractors \emph{preserved} by the network reduction?

\subsection{Definition and examples}

To express and illustrate structural conditions on the interaction graphs, in this section we will adhere to the following conventions.
We will write $i \xrightarrow{s} j$ for an edge with sign $s$ from $i$ to $j$, whereas $i\to j$ will denote the existence of an edge of any sign.
In addition, to represent classes of interaction graphs in compact form, we will summarize subgraphs using subsets of vertices.
For instance, given $X,Y \subseteq V$, $X\to Y$ will denote an interaction graph consisting of arbitrary signed directed graphs with vertices in $X$ and $Y$ respectively, and at least one edge from some variable in $X$ to some variable in $Y$.
We will also denote the possibility of existence of an edge from a vertex to another using dashed arrows.
Thus, for instance, $X\dashedrightarrow Y$ will denote the possible existence of an edge from some variable in $X$ to some variable in $Y$.

Before we answer the question posed in the introduction of this section, we need to clarify the meaning of the term ``preservation''.
In agreement with Definition~$2.3$ in \cite{saadatpour2013reduction}, we consider the following definition.

\begin{definition}\label{def:strong-pres-attr}
  We say that the attractors of $f$ are \emph{preserved} by the elimination of $v$ if the following two conditions are satisfied:
  \begin{itemize}
    \item[(i)] for each attractor $A$ of $AD(f)$, $\pi(A)$ is an attractor of $AD(\tf)$, and
    \item[(ii)] for each attractor $\tA$ of $AD(\tf)$, there exists a unique attractor $A$ of $AD(f)$ such that $\pi(A)=\tA$.
  \end{itemize}
\end{definition}
Note that, if attractors are preserved, their number cannot change as a result of the reduction.

\begin{example}\label{ex:forward}
  Consider the map defined by
  \begin{equation*}
    f(x_u,x_v,x_w)=(\bar{x}_u,x_u,(x_u\wedge x_w)\vee(\bar{x}_v\wedge x_w)\vee(x_u\wedge \bar{x}_v))
  \end{equation*}
  with interaction graph
  \begin{equation*}
    \begin{tikzcd}
      u \arrow[r,"+1"] \arrow[rr,bend right=25,"+1"'] \arrow[loop left,->,"-1"] & v \arrow[r,->,"-1"] & w \arrow[loop right,"+1"]
    \end{tikzcd}
  \end{equation*}
  Then $f$ has only one attractor (the full space), whereas the map obtained by elimination of $v$ has two attractors.
  The asynchronous state transition graphs for $f$ and $\tf$ are represented in~\cref{fig:ex-forward}.

  \begin{figure}
  \centering
  \begin{tikzcd}[column sep=tiny,row sep=tiny]
    & 011 \arrow[dl] \arrow[dd] \arrow[rr,yshift=+1pt] & & {\color{black}111} \arrow[ll,yshift=-1pt] \\
    {\color{black}001} \arrow[rr,yshift=+1.5pt,crossing over] & & 101 \arrow[ll,yshift=-1.0pt] \arrow[ru] & \\
    & 010 \arrow[dl] \arrow[rr,yshift=-1pt] & & {\color{black}110} \arrow[ll,yshift=+1pt] \\
    {\color{black}000} \arrow[rr,yshift=-1pt] & & 100 \arrow[ll,yshift=+1pt] \arrow[uu,crossing over] \arrow[ur] &
  \end{tikzcd}\qquad
  \begin{tikzcd}
    01 \arrow[r,yshift=+1pt] & 11 \arrow[l,yshift=-1pt] \\
    00 \arrow[r,yshift=+1pt] & 10 \arrow[l,yshift=-1pt]
  \end{tikzcd}\caption{Asynchronous state transition graphs for the map $f(x_u,x_v,x_w)=(\bar{x}_u,x_u,(x_u\wedge x_w)\vee(\bar{x}_v\wedge x_w)\vee(x_u\wedge \bar{x}_v))$
  (left) and the one obtained from $f$ by eliminating $v$ (right). The state transition graphs have one cyclic attractor and two cyclic attractors respectively.}\label{fig:ex-forward}
  \end{figure}
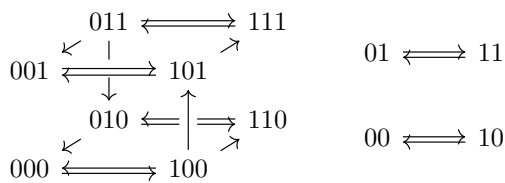

\end{example}

The example shows how the removal of a simple mediator vertex (a component with only one regulator and only one target) can have an impact on the number of attractors.
In \cite{saadatpour2013reduction}, the authors claim that the removal of a mediator variable $v$ does not impact the number of attractors if the regulator and target of $v$ are not regulators of each other.
Here we show that, on the contrary, the removal of a single mediator vertex from a chain of mediator variables of arbitrary length can change the number of attractors.
In other words, for each $n\geq 1$, one can construct a map with interaction graph of the form
\begin{equation}\label{eq:ig-counterex}
  \begin{tikzcd}
    U \arrow[r] & v_1 \arrow[r] & v_2 \arrow[r] & \cdots \arrow[r] & v_{n+1} \arrow[r] & W \arrow[lllll,bend left=20]
  \end{tikzcd}
\end{equation}
such that the reduced network obtained by eliminating any of the variables $v_i$, $i=1,\dots,n+1$, has a different number of cyclic attractors.

Write $\0$ and $\1$ for the states with all components equal to 0 or 1, respectively.
The idea for the construction of the map is as follows.
We have a chain of $n+1$ mediator variables $v_1,\dots,v_{n+1}$.
Downstream of the chain, the networks has $n+2$ variables $W$, that regulate each other and have $v_{n+1}$, the last mediator variable, as their unique regulator outside of $W$.
The variables in $W$ regulate a variable $u$ whose unique role is to regulate the first mediator variable $v_1$.
We want the initial network to have a unique steady state in $\1$. We build it so that,
starting from the state $\0$, exactly $n+1$ mediator variables are required to switch on all the variables in $W$, so that
the state $\1$ cannot be reached from $\0$ when one of the mediator variables is removed, and another attractor must exist in the reduced network.

To impose such behaviour, we build the following dependencies for the variables in $W$: they can only be updated from $0$ to $1$ in order (first $w_1$, then $w_2$, etc.), and the variables in odd positions require, in order to change from $0$ to $1$, the condition $v_{n+1}=1$ for the last mediator variable, whereas the variables in even positions require $v_{n+1}=0$.
Therefore, in order to reach the state $\1$, we are forced to have alternating values for the mediator variables.

The values of the mediator variables are simply propagated from the variable $u$.
The Boolean function $f_u$ is defined so that $x_u$ is forced to be $1$ when some variables in $W$ are equal to $1$, and can otherwise oscillate freely, thus providing the oscillating input to the chain of mediator nodes.

The detailed definition of the map is given in the following thereom.

\begin{theorem}\label{thm:chain}
  For each $n\geq 1$ there exists a Boolean network $f$ with interaction graph $G(f)$ that admits a path of length $n$
  of variables with indegree one and outdegree one, such that the network $\tf$ obtained from $f$ by removing any of
  the variables in the path satisfies $S(\tf)>S(f)$.
\end{theorem}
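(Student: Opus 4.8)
The plan is to let \cref{cor:fixed-points-and-2}~\ref{cor:equality-bound-fixed} do the counting. Eliminating a component $v_k$ gives $S(\tf)=S(f)+A(f,v_k)$, so to obtain $S(\tf)>S(f)$ for every $v_k$ in the chain it suffices to build a single network $f$ with interaction graph of the shape \eqref{eq:ig-counterex} for which $A(f,v_k)\ge 1$ for each $k$: for each node of the chain one must exhibit a cyclic attractor of $AD(f)$ made of two states that differ only in $v_k$. Equivalently, each $v_k$ must oscillate in some attractor of $f$, i.e. carry a negative loop there (a feature provided by the detailed construction); it is exactly this negative autoregulation that triggers the strict inequality through the corollary, and makes the general elimination method of this paper the right tool rather than the node-by-node comparison of transition graphs used for the small \cref{ex:forward}.

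I would then write $f$ down explicitly following the recipe in the text and analyse its asynchronous dynamics. The engine of the argument is a shift-register count on the chain: the $n+1$ nodes can store at most $n+1$ of the strictly alternating values of $v_{n+1}$ that the $n+2$ gates of $W$ require in order to switch on in sequence. On every trajectory along which $u$ freezes to $1$ before the last gate has fired, exactly one gate stays permanently off; the block $W$ is then pinned, and in the resulting configuration a single chain node is released to oscillate on its negative loop. The rules are calibrated so that the node left oscillating when the $k$-th gate is the blocked one is precisely $v_k$, yielding the two-state attractor $\{x,\bar{x}^{v_k}\}$ for each $k$.

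Granting such a state $x$, the conclusion is formal. One checks that $\{x,\bar{x}^{v_k}\}$ is a minimal trap set of $AD(f)$ and that $v_k$ genuinely oscillates there (so it carries a negative, never a positive, loop, and \cref{cor:fixed-points-and-2} applies). Then \cref{thm:fixed-points}(ii) collapses $\{x,\bar{x}^{v_k}\}$ onto a fixed point $\pi(x)$ of $\tf$, while \cref{thm:fixed-points}(i) guarantees that $\pi(x)$ is the image of no fixed point of $f$. Hence $A(f,v_k)\ge 1$, and therefore $S(\tf)=S(f)+A(f,v_k)>S(f)$, uniformly in $k$ and in $n$.

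The real work, and the step I expect to be the main obstacle, is the combinatorial bookkeeping for $f$: making the ordered, alternating activation of $W$ and the freezing rule for $u$ precise enough that exactly one gate is blocked --- not zero, which would keep $\1$ reachable and force $A(f,v_k)=0$, and not two --- and then confirming that the blocked configuration releases exactly one chain node and that the pair $\{x,\bar{x}^{v_k}\}$ admits no escaping transition along any other component. This has to be arranged once and for all, uniformly over the positions $k$ and the lengths $n$, which is where the bulk of the verification lies; once these two-state attractors are in place, the passage to $\tf$ through the corollary is immediate.
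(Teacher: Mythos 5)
Your plan cannot be carried out, and the obstruction is structural rather than a matter of bookkeeping. You reduce the claim, via \cref{cor:fixed-points-and-2}, to exhibiting for each $k$ a cyclic attractor $\{x,\bar{x}^{v_k}\}$ of $AD(f)$, i.e.\ $A(f,v_k)\geq 1$. But $v_k$ is required to have indegree one in $G(f)$, with its unique regulator being the previous node of the chain; hence $f_{v_k}$ does not depend on $x_{v_k}$, so $f_{v_k}(x)=f_{v_k}(\bar{x}^{v_k})$, and this common value cannot differ from both $x_{v_k}$ and $1-x_{v_k}$. No two-state attractor oscillating in $v_k$ can exist: $A(f,v_k)=0$ identically for any network satisfying the hypotheses. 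The ``negative loop on $v_k$'' you invoke to make $v_k$ oscillate would raise its indegree and outdegree to two and destroy the very mediator-chain structure the theorem is about. In fact, your own reduction shows that the corollary forces $S(\tf)=S(f)+A(f,v_k)=S(f)$ here, so the displayed inequality cannot be obtained along this route at all; the quantity that the construction actually makes grow is the total number of attractors $S+A$ (this theorem is the paper's witness for strictness of \cref{cor:fixed-points-and-2}(iii), in the spirit of \cref{ex:forward}), not the number of fixed points.

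The paper's argument is of a different shape and involves no two-state attractors of $f$ anywhere. It builds $f$ with interaction graph \eqref{eq:ig-counterex} so that (a) $\1$ is the \emph{unique} attractor of $AD(f)$ --- every state reaches it, so nothing in $AD(f)$ is ``pinned'' or ``released to oscillate'' --- and (b) after deleting any $v_k$, the state $\1$ is no longer reachable from $\0$ in $AD(\tf)$, because the shortened chain can buffer at most $n$ of the $n+1$ alternations of $v_{n+1}$ that the ordered activation of the $n+2$ gates in $W$ requires; hence $AD(\tf)$ has a second, necessarily cyclic, attractor. The blocking phenomenon you describe as occurring in $f$ in fact only appears in $\tf$, and it is detected by a reachability count on the reduced dynamics, not by locating attractors of the original network and pushing them through \cref{thm:fixed-points}.
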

\begin{proof}
  We define a Boolean network $f$ of dimension $2(n+2)$ with interaction graph of the form
  given in~\cref{eq:ig-counterex}.
  Denote $u,v_1,\dots,v_{n+1},w_1,\dots,w_{n+2}$ the variables of the Boolean network.

  Set $V=\{v_1,\dots,v_{n+1}\}$, $W=\{w_1,\dots,w_{n+2}\}$, and
  \begin{equation*}
    \begin{aligned}
      f_u(x_u,x_V,x_W)&=\left(\bigwedge_{j=1}^{n+2}x_{w_j}\right)\vee
                       \left(\bar{x}_u\wedge\bigwedge_{j=1}^{n+2}\bar{x}_{w_j}\right)\vee
                       \left(x_u\wedge\overline{\bigwedge_{j=1}^{n+2}x_{w_j}}\wedge\overline{\bigwedge_{j=1}^{n+2}\bar{x}_{w_j}}\right)\\
      f_{v_1}(x_u,x_V,x_W)&=x_{u},\\
      f_{v_i}(x_u,x_V,x_W)&=x_{v_{i-1}} \ \text{for }i=2,\dots,n+1,\\
      f_{w_i}(x_u,x_V,x_W)&=\left(\bigwedge_{j=1}^{n+2}x_{w_j}\right)\vee
                          \left(x_{v_{n+1}}\wedge\bigwedge_{j=1}^{i-1}x_{w_j}\wedge\bigwedge_{j=i}^{n+2}\bar{x}_{w_j}\right)
                          \ \text{for }i=1,\dots,n+2\ \text{if } i\equiv 1\ \mathrm{mod}\ 2,\\
      f_{w_i}(x_u,x_V,x_W)&=\left(\bigwedge_{j=1}^{n+2}x_{w_j}\right)\vee
                          \left(\bar{x}_{v_{n+1}}\wedge\bigwedge_{j=1}^{i-1}x_{w_j}\wedge\bigwedge_{j=i}^{n+2}\bar{x}_{w_j}\right)
                          \ \text{for }i=1,\dots,n+2\ \text{if } i\equiv 0\ \mathrm{mod}\ 2.
    \end{aligned}
  \end{equation*}
  The variables $v_1,\dots,v_{n+1}$ are a chain of variables with indegree and outdegree equal to one in the interaction graph of $f$.
  The first conjunction in each function is to ensure that $\1$ is a fixed point.
   The definition of $f_u$ is such that component $u$ can be changed from $0$ to $1$ and vice versa, as long as the variables in $W$ are all equal to $0$.
   The definition of $f_{w_i}$ shows a dependency on the previous variables $w_1,\dots,w_{i-1}$, and is different for variables in odd and even positions in $W$ in terms of the dependency on the last mediator variable $v_{n+1}$.
  
  We prove that $(a)$ for $AD(f)$, the fixed point $\1$ is the unique attractor and $(b)$ $AD(\tf)$ has an additional cyclic attractor.

  $(a)$ It is easy to see that $z=\1$ is a fixed point for $f$.
  We show that, for each $y\in\B^{2n+4}$, $y\neq z$, there exists a path in $AD(f)$ from $y$ to $z$.
  \begin{itemize}
    \item[(i)] Case $y_W=\1$: we have $f_u(y)=1$, and the reachability of $z$ from $y$
      is direct from the definition of $f$.
    \item[(ii)] Case $y_W=\0$: first observe that there is a path from $y$ to $y^1$ that satisfies $y^1_u=1$, $y^1_{v_i}=1$ for $i=1,\dots,n+1$ and $y^1_W=\0$.

    From $y^1$, one can switch component $u$ to zero and construct a path to the state $y^2$ defined by $y^2_u=0$, $y^2_{v_i}=0$ for $i=1,\dots,n$, $y^2_{v_{n+1}}=1$ and $y^2_W=\0$.

    Then component $u$ can be switched back to one, and its value propagated to component $v_{n-1}$, while keeping component $v_n$ to zero.
    Continuing with this construction, one can reach a state $y'$ that satisfies
    $$y'_{v_{n+2-i}}\equiv i\ \mathrm{mod}\ 2\ \text{for }i=1,\dots,n+1, y'_W=0.$$

    Using $y'_{v_{n+1}}=1$ one can then update the value of component $w_1$.
    After this, to update the value of $w_2$, one needs to first propagate the value of component $v_{n}$ to change component $v_{n+1}$ to zero, and so forth.
    One can therefore reach states $z^1,\dots,z^{n+1}$ that satisfy
    \begin{align*}
      & z^1_{v_{n+1}}=1,z^1_{w_1}=1,z^1_{w_2}=0,\dots,z^1_{w_{n+1}}=0,\\
      & z^2_{v_{n+1}}=0,z^2_{w_1}=1,z^2_{w_2}=1,z^2_{w_3}=0,\dots,z^2_{w_{n+1}}=0,\\
      & \cdots\\
      & z^{n+1}_{v_{n+1}}\equiv n+1\ \mathrm{mod}\ 2,z^{n+1}_W=\1.
    \end{align*}
    Hence, we have a path from $y$ to $z$ by point $(i)$.

    \item[(iii)]
      In the remaining cases, it is easy to see that there is a path from $y$ to a state $y'$ with $y'_W=\0$.
      We conclude using point $(ii)$.
  \end{itemize}

  $(b)$ Consider now the asynchronous state transition graph for the network $\tf\colon\B^{2n+3}\to\B^{2n+3}$ obtained from $f$
  by eliminating one of the variables $v_1,\dots,v_n$.
  Without loss of generality, we can consider the case where $v_1$ is eliminated.
  Consider the set of states $A$ reachable from $\0\in\B^{2n+3}$ in $AD(\tf)$, and define $\alpha = \max_{y\in A} \sum_{j=1}^{n+1}y_{w_j}$.
  We show that $\alpha\leq n$, and therefore $\1\notin A$, and $AD(\tf)$ admits
  at least one attractor distinct from $\1$.

  Take any $y \in A$ and a path from $\0$ to $y$, and call $y'$ the last state in the path that satisfies $y'_W=\0$.

  Denote by $p$ the path from $y'$ to $y$. Note that component $u$ does not change in $p$.
  In addition, $\alpha-1$ is bounded by the number of times component $v_{n+1}$ changes in $p$.
  Since component $u$ is fixed in $p$, $\alpha-1$ is bounded by the cardinality
  of $\{i\in\{2,\dots,n\}\ |\ y'_{v_i}\neq y'_{v_{i+1}}\}$.
  Hence $\alpha$ is bounded by $n$, which concludes.
\end{proof}

The result demonstrates how a very modest change in the interaction graph can have a significant impact on the asymptotic behaviour of asynchronous dynamics.
In the next section we show that attractors are preserved if cycles containing an intermediate variable are not allowed, and the regulators of the intermediate variables do not directly regulate the targets of the intermediate variables.

\subsection{Sufficient conditions}

We now give sufficient conditions on the interaction graph of a Boolean network
for the preservation of attractors to hold.
For the proof we will use the following lemma.

\begin{lemma}\label{lemma:paths-I}
  Consider a Boolean network $f$ with set of components $V$.
  Take $W \subset V$ and $I \subseteq V \setminus W$, $I\neq\emptyset$ and
  suppose that for all $j\in W$ there is no path from $j$ to $I$ in $G(f)$.
  If there exists a path in $AD(f)$ from a state $x$ to a state $y$ with $y_I=\bar{x}_I$,
  then there exists a path in $AD(f)$ from $x$ to a state $z$ such that $z_I=\bar{x}_I$ and $z_W=x_W$.
\end{lemma}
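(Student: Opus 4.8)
We have components $V$, a subset $W\subset V$ and a nonempty $I\subseteq V\setminus W$, with the key structural hypothesis: no path in $G(f)$ from any $j\in W$ to $I$. Given an asynchronous path from $x$ to some $y$ that flips all of $I$ (i.e. $y_I=\bar x_I$), we must produce a path from $x$ to some $z$ that also flips all of $I$ but leaves $W$ unchanged ($z_W=x_W$). The intuition is that the $W$-components are "downstream" of $I$ and cannot feed back into $I$, so any updates to $W$ along the original path are irrelevant to achieving the sign flip on $I$ and can simply be deleted.

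**Plan of the proof.** The plan is to take the original path $x=x^0\to x^1\to\cdots\to x^k=y$ and delete exactly the transitions that update a component of $W$, producing a candidate shorter sequence, then verify it is a genuine path in $AD(f)$. Let me formalize the deletion. Let $U=V\setminus W$, so that $I\subseteq U$ and every transition in the path updates some component either in $U$ or in $W$. I would define, for each state $x^t$ in the path, its "restriction-candidate" $\hat x^t$ obtained by keeping the $U$-coordinates of $x^t$ but freezing the $W$-coordinates at their initial value $x_W$. The claim I want is that the subsequence of states at which a $U$-component is updated, with $W$ frozen to $x_W$, forms a valid path in $AD(f)$.

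**The key verification.** The crux is this: suppose $x^t\to x^{t+1}$ is a transition that flips a component $i\in U$ (so $f_i(x^t)\neq x^t_i$). I must show that the corresponding "frozen" transition is still valid, i.e. $f_i(\tilde x)\neq \tilde x_i$ where $\tilde x$ agrees with $x^t$ on $U$ but has $\tilde x_W=x_W$. For this I would use the no-path hypothesis in the form that $f_i$ cannot depend on any coordinate in $W$: if $f_i$ depended on some $w\in W$, there would be an edge $w\to i$ in $G(f)$, giving a length-one path from $W$ to $I$ (when $i\in I$), and more generally, by induction along the ordered deletion, one shows that the $U$-restriction of the dynamics is governed only by $U$-coordinates. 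Concretely, the hypothesis "no path from $W$ to $I$" should be strengthened/unpacked to "no path from $W$ to $U_0$", where $U_0$ is the set of components in $U$ reachable-backward from $I$; but the cleanest route is to observe that the set of components with no incoming path from $W$ is closed, and $I$ lies in it, so I would work with the maximal such set. The main obstacle is precisely bookkeeping which $U$-components are "protected" (have no incoming path from $W$): a naive freezing of all of $W$ may break a transition on a $U$-component $i$ that does depend on $W$. I expect the fix is to freeze $W$ and simultaneously observe that only the transitions on protected components need to be retained to flip $I$, while transitions on unprotected $U$-components can also be dropped without affecting $I$, since the flip of $I$ is witnessed by transitions internal to the protected set.

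**Alternative cleaner approach.** A tidier formulation, which I would actually prefer to write, is by induction on the number of $W$-updates in the path, removing them one at a time from the \emph{last} one. The plan is: if the path has no transition updating a $W$-component, then no component of $W$ ever changes (since each transition changes exactly one component), so $y_W=x_W$ and we take $z=y$, done. Otherwise, let $x^t\to x^{t+1}$ be the last transition that updates a $W$-component $w$. Every subsequent transition $x^s\to x^{s+1}$ (for $s>t$) updates a component $i\in U$ with $f_i(x^s)\neq x^s_i$; because $i\in U$ has no incoming path from $W$ in particular no edge $w\to i$, flipping $w$ back does not change whether $f_i$ fires there. Thus I can delete the transition $x^t\to x^{t+1}$ and propagate the old value $x^t_w$ forward through the tail, obtaining a shorter valid path from $x$ to $\bar y^{\{w\}}$; crucially the $I$-coordinates are untouched since $w\notin I$, so the endpoint still satisfies the flip condition on $I$. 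Iterating removes all $W$-updates, yielding the desired $z$ with $z_W=x_W$ and $z_I=\bar x_I$. The step requiring care is justifying that deleting \emph{one} $W$-update preserves validity of \emph{all} later $U$-transitions simultaneously — this is where the full strength "no path from $W$ to $I$", and its consequence "no edge from the flipped $W$-vertex into any later-updated $U$-vertex that could propagate into $I$", must be invoked, and I would phrase it so that each later transition's firing condition is certified to be independent of the single frozen coordinate $w$.
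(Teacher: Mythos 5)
Your first sketch contains the right idea, but the ``alternative cleaner approach'' that you say you would actually write down has a genuine gap --- and it is exactly the obstacle you yourself flagged in the first plan and then set aside. In the inductive step you delete the last transition that updates a component $w\in W$ and claim that every later transition, which updates some $i\in U=V\setminus W$, remains valid because ``$i\in U$ has no incoming path from $W$''. That is not what the hypothesis gives you: only the components of $I$ are assumed to have no incoming path from $W$; a component $i\in U\setminus I$ may perfectly well have an edge $w\to i$ in $G(f)$, in which case $f_i$ can change value once $w$ is frozen at its old value, and the retained transition on $i$ need no longer exist in $AD(f)$. So an induction that removes only the $W$-updates while keeping all $U$-updates does not produce a valid path in general.

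The repair is the one you gestured at in your first plan but did not carry out: let $W'$ be the set of all vertices reachable from $W$ in $G(f)$ (so $W\subseteq W'$ and $I\cap W'=\emptyset$ by hypothesis), and delete from the update sequence \emph{every} index lying in $W'$, not just those in $W$. A component $i\notin W'$ cannot be regulated by any $j\in W'$, since otherwise $i$ would be reachable from $W$; hence $f_i$ is unaffected by resetting any subset of the $W'$-coordinates, and each retained transition stays valid. The resulting endpoint $z$ agrees with $y$ outside $W'$, so $z_I=y_I=\bar{x}_I$, and agrees with $x$ on $W'\supseteq W$, so $z_W=x_W$. This is precisely the proof given in the paper, carried out in one pass over the update sequence rather than by induction on the number of deleted updates.
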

\begin{proof}
  Write $W'$ for the vertices that are reachable from $W$ in $G(f)$.
  Observe that, if $u\to\bar{u}^i$ is a transition in $AD(f)$ and $i$ is not reachable from $W$ in $G(f)$, then, for any subset $W''$ of $W'$, the transition $\bar{u}^{W''}\to\bar{u}^{W''\cup\{i\}}$ exists in $AD(f)$.

  Now consider a path $x = x^{1}\rightarrow \dots \rightarrow x^{m} = y$ from $x$ to $y$, and write $i_1,\dots,i_m$ for the sequence of components being updated along the path from $x$ to $x^m$.
  Consider the subsequence obtained from $i_1,\dots,i_m$ by removing all indices in $W'$.
  Then, by the previous observation, the subsequence defines a trajectory in $AD(f)$ from $x$ to a state $z$ that satisfies $z_I=y_I=\bar{x}_I$ and $z_W=x_W$.
\end{proof}

\begin{theorem}\label{thm:pres-attr}
  Suppose that the interaction graph of $f$ is of the form
  \begin{equation}\label{eq:graph-main}
    \begin{tikzcd}
      U_1 \arrow[r,dashed] \arrow[rrr,dashed,bend right=20] & U_2 \arrow[r] & v \arrow[loop,out=130,in=50,dashed,looseness=4,"-1"] \arrow[r] & W
    \end{tikzcd}
  \end{equation}
  for some $U_1,U_2,W \subset V$, $v \in V$.
  Then the attractors of $f$ are preserved by the elimination of $v$.
\end{theorem}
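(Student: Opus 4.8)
The plan is to read off from~\cref{eq:graph-main} the structural facts that drive everything, and then reduce the two clauses of~\cref{def:strong-pres-attr} to a single existence statement. From the graph, the only regulators of $v$ are the vertices of $U_2$ (plus $v$ itself, through the optional negative loop), and the only targets of $v$ are the vertices of $W$; in particular there is no positive loop at $v$, so the reduction of~\cref{eq:casesftilde} is defined and \cref{lemma2,thm:fixed-points} are available. The absence of any edge out of $W$ and of any edge from $U_2$ to $W$ yields three facts I will use repeatedly: the set $U=U_1\cup U_2$ is autonomous (so $\tf$ restricted to $U$ equals $f$ restricted to $U$); for $i\in W$ the function $f_i$ does not depend on the coordinates of $U_2$; and there is no path from $W$ to $\{v\}\cup U_2$, so $v$ lies on no cycle of length at least two. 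These last two facts are exactly the ones violated in~\cref{ex:forward}.

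Next I would reduce the whole theorem to existence. Clause (i) of~\cref{def:strong-pres-attr} follows from clause (ii): if $A$ is an attractor of $f$, then $\pi(A)$ is a trap set of $\tf$ by~\cref{thm:fixed-points}, hence contains an attractor $\tA_0$; granting (ii) there is an attractor $A_0$ of $f$ with $\pi(A_0)=\tA_0$, and since $\tA_0\subseteq\pi(A)$ both $A$ and $A_0$ meet $\pi^{-1}(\tA_0)$, so the uniqueness part of~\cref{thm:fixed-points} forces $A=A_0$ and $\pi(A)=\tA_0$. For clause (ii), uniqueness is again the last part of~\cref{thm:fixed-points}, and any attractor $A$ of $f$ contained in $\pi^{-1}(\tA)$ automatically satisfies $\pi(A)=\tA$, because $\pi(A)$ is then a nonempty trap set of $\tf$ contained in the minimal trap set $\tA$. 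Hence the entire statement reduces to the claim that \emph{for every attractor $\tA$ of $\tf$, the set $\pi^{-1}(\tA)$ contains an attractor of $f$}, equivalently that it contains a trap set.

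To establish this I would start from the representatives $R=\S^0(\tA)\cup\S^1(\tA)\subseteq\pi^{-1}(\tA)$, which are strongly connected by~\cref{lem:x-to-repr2} and whose outgoing transitions all remain in $\pi^{-1}(\tA)$: $v$-transitions preserve $\pi$, and transitions in a direction $i\neq v$ project to transitions of $\tf$ out of $\tA$ by~\cref{lem:repr-to-any2}, hence stay in $\pi^{-1}(\tA)$ as $\tA$ is a trap set. Since $f_v$ depends only on $U_2$ and $v$, flipping a coordinate of $U_1$ or $W$ maps a representative to a representative, so the only way to leave $R$ is by updating a coordinate of $U_2$, which can produce a non-representative $s$ at which $v$ is ``pending'' in the sense of~\cref{rmk:R0noteqR1} (here $f_v(s^{v=0})=f_v(s^{v=1})=\bar s_v$ and $\S^0(\pi(s))=\S^1(\pi(s))=\bar s^v$). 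A direct computation shows that the \emph{only} transitions of $AD(f)$ leaving $\pi^{-1}(\tA)$ start at such states $s$ and flip a coordinate $i\in W$ with $f_i(s)\neq f_i(\bar s^v)$ --- precisely the escape that manufactures the extra attractor in~\cref{ex:forward}.

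The heart of the argument, and the step I expect to be the main obstacle, is to show that under our hypotheses these escapes never exhaust $\pi^{-1}(\tA)$, i.e.\ that the set $T=\{x\in\pi^{-1}(\tA)\ :\ \text{every state reachable from }x\text{ lies in }\pi^{-1}(\tA)\}$ is nonempty; $T$ is then the sought trap set. Here the two structural facts enter through~\cref{lemma:paths-I}: because $f_i$ is independent of $U_2$ for $i\in W$, and because $W$ cannot reach $\{v\}\cup U_2$, I can reorganise trajectories so that the update of $v$ is always performed before the dependent $W$-update, and so that the upstream manoeuvres in $U_2$ and $v$ needed to realise a given $W$-flip are carried out without disturbing $W$. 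This should let me match every genuine $W$-transition available inside $\pi^{-1}(\tA)$ to a transition of $\tf$ within $\tA$, so that the recurrent part of the dynamics inside $\pi^{-1}(\tA)$ avoids the escaping non-representatives and lands in a trap set. Making this reordering precise --- tracking that deferring the $W$-updates keeps the projection inside $\tA$ at every step, and that the pending value of $v$ realigns with a representative before any downstream coordinate is touched --- is the delicate bookkeeping on which the proof rests.
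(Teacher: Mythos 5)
Your logical preprocessing is sound and in one respect genuinely nicer than the paper's: the observation that clause (i) of \cref{def:strong-pres-attr} follows formally from clause (ii) together with \cref{thm:fixed-points}~(iv) and (vi) is correct (if $\tA_0$ is an attractor inside the trap set $\pi(A)$ and $A_0$ is the attractor of $f$ projecting onto it, both $A$ and $A_0$ meet $\pi^{-1}(\tA_0)$, so they coincide), whereas the paper proves (i) by a separate direct construction. Your characterisation of the transitions that can leave $\pi^{-1}(\tA)$ (non-representative source $s$, direction $i\in W$ with an edge $v\to i$ and $f_i(s)\neq f_i(\bar{s}^v)$) is also correct. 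But the proof stops exactly where the real work begins. Everything has been reduced to showing that $\pi^{-1}(\tA)$ contains a trap set of $f$, and your set $T$ of states from which no escape is reachable is just a restatement of that goal; its nonemptiness is the theorem. The sketch you offer for closing this gap — reorder trajectories via \cref{lemma:paths-I} so as to ``match every genuine $W$-transition available inside $\pi^{-1}(\tA)$ to a transition of $\tf$ within $\tA$'' — aims at a statement that is false: it would imply that $\pi^{-1}(\tA)$ itself is a trap set, which fails already for $f(x_1,x_2,x_3)=(1,x_1,x_2)$ with $v=2$, where $\tA=\{(1,1)\}$ but $(1,0,1)\in\pi^{-1}(\tA)$ has the escaping transition to $(1,0,0)$.

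The missing idea is a concrete, smaller candidate trap set. The paper takes $B=\{x\in\pi^{-1}(\tA)\ :\ x_v\in f_v(\pi^{-1}(\tA))\}$, i.e.\ it discards the states whose $v$-coordinate is not an achievable value of $f_v$ on the preimage, and shows $B$ is a trap set. The point is that membership $x_v\in f_v(\pi^{-1}(\tA))$ supplies a state $z\in\pi^{-1}(\tA)$ with $f_v(z)=x_v$, which can then be normalised (this is where \cref{lemma:paths-I} and the hypotheses that $f_i$ for $i\in W$ is independent of $U_2$ and that $W$ reaches neither $U$ nor $v$ are actually used) into a state $y$ with $y_v=x_v$, $y_{U_1}=x_{U_1}$, $y_W=x_W$ and $\pi(y)$ reachable from $\pi(x)$ inside $\tA$; at $y$ the $i$-transition does project, so the apparent escape from $x$ lands back in $\tA$ after projection. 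Without some device of this kind singling out which states survive, the ``delicate bookkeeping'' you defer cannot be completed, so as it stands the argument for the existence half of clause (ii) — and hence for the whole theorem — is not there.
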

\begin{proof}
  Write $\tf$ for the network obtained by elimination of $v$, and set $U=U_1\cup U_2$.
  Start by observing that, by~\cref{prop:ig}, the interaction graph $G(\tf)$ takes the form
  \begin{equation*}
    \begin{tikzcd}
      U_1 \arrow[r,dashed] \arrow[rr,dashed,bend right=20] & U_2 \arrow[r] & W.
    \end{tikzcd}
  \end{equation*}
  Without loss of generality, we can write a state $x$ in $\B^n$ as $x=(x_U,x_v,x_W)=(x_{U_1},x_{U_2},x_v,x_W)$.
  In the proof, we use the notation $x \rightsquigarrow y$ to indicate the existence of a path from $x$ to $y$.
  \begin{enumerate}
    \item
    Consider point $(i)$ of~\cref{def:strong-pres-attr}. If $A$ is an attractor for $AD(f)$,
    by~\cref{thm:fixed-points} $(iv)$ $\pi(A)$ is a trap set for $AD(\tf)$. It remains to show that $\pi(A)$ is strongly connected.
    It is sufficient to show that for each transition $x\to\bar{x}^i$ in $AD(f)$ with $x\in A$
    there is a path from $\pi(x)$ to $\pi(\bar{x}^i)$ in $AD(\tf)$.
    By~\cref{lemma2} we only have to consider the case of $i\in W$
    such that $v\to i$ is an edge in $G(f)$, and $x\neq\R^0(x)=\R^1(x)$, that is, $f_i(x)\neq x_i$ and $x_v\neq f_v(x)$.
    In this case, we are not directly guaranteed a transition from $\pi(x)$ to $\pi(\bar{x}^i)$ in $AD(\tf)$, and we have to construct an alternative path.
    The idea is to create a path to a state where component $i$ changes and that is a ``representative state'', so that the transition involving variable $i$ is preserved with the elimination of $v$.

    Since $x_v\in f_v(A)$, there exists a path in $AD(f)$ from $x$ to a state $z\in A$ such that $z_v\neq f_v(z)=x_v\neq f_v(x)$:
    \begin{equation*}
        x = (x_{U_1},x_{U_2},x_v,x_W) \rightsquigarrow z = (z_{U_1},z_{U_2}, z_v, z_W), \ \ x_v = f_v(z) \neq z_v.
    \end{equation*}
    We now apply~\cref{lemma:paths-I} twice:
    \begin{itemize}
      \item component $v$ depends only on $U_2\cup\{v\}$, and there is no path from $W$ to $U_2\cup\{v\}$ in $G(f)$,
            hence we can assume that $z_{W}=x_{W}$, that is
            \begin{equation*}
                x = (x_{U_1},x_{U_2},x_v,x_W) \rightsquigarrow z = (z_{U_1},z_{U_2}, z_v, x_W), \ \ x_v = f_v(z) \neq z_v.
            \end{equation*}
      \item $x$ and $z$ are in the attractor, hence there is a path from $z$ to $x$ in $AD(f)$.
            In particular, there is a path from $z$ to a state $z'$ with $z'_{U_1}=x_{U_1}$:
            \begin{equation*}
                z = (z_{U_1},z_{U_2}, z_v, x_W) \rightsquigarrow z' = (x_{U_1}, z'_{U_2}, z'_v, z'_W).
            \end{equation*}
            Since there is no path from $U_2\cup\{v\}\cup W$ to $U_1$ in $G(f)$, we can assume $z_{U_2\cup\{v\}\cup W}=z'_{U_2\cup\{v\}\cup W}$.
            In particular, we can assume that $z$ satisfies $z_W=x_W$ and $z_{U_1}=x_{U_1}$,
            obtaining a path
            \begin{equation*}
                x = (x_{U_1},x_{U_2},x_v,x_W) \rightsquigarrow z = (x_{U_1},z_{U_2},z_v,x_W) \to y = (x_{U_1},z_{U_2},\bar{z}^v_v,x_W),
            \end{equation*}
            where we defined $y=\bar{z}^v$ and the transition $z \to y$ derives from the definition of $z$.
    \end{itemize}

    We now look at deriving a path in $AD(\tf)$ from this path.
    Since there is no path from $v$ to $U$ in $G(f)$, by~\cref{lemma2}~\ref{lem:not-infl2} there is a path from $\pi(x)=(x_{U_1},x_{U_2},x_W)$ to $\pi(z)=(x_{U_1},z_{U_2},x_W)$ in $AD(\tf)$.
    In addition, since $i$ depends only on $U_1$, $v$ and $W$, we have
    $f_i(y)=f_i(x_{U_1},z_{U_2},x_v,x_W)=f_i(x)\neq x_i=y_i$, and there is a transition from $y$ to $\bar{y}^i$.
    It is easy to see that \cref{lemma2} applies and there is a transition from $\pi(z)=\pi(y)$ to $\pi(\bar{y}^i)$ in $AD(\tf)$.
    In summary, we obtained a path
    \begin{equation*}
        \pi(x)=(x_{U_1},x_{U_2},x_W) \rightsquigarrow \pi(z)=\pi(y)=(x_{U_1},z_{U_2},x_W) \to \pi(\bar{y}^i)=(x_{U_1},z_{U_2},\bar{x}^i_W).
    \end{equation*}

    Since there is a path from $\bar{y}^i$ to $\bar{x}^i$ in $A$, and the variables in $U$ do not depend on $v$,
    we can again apply~\cref{lemma2}~\ref{lem:not-infl2} and find that
    there is a path from $\pi(\bar{y}^{i})$ to $\pi(\bar{x}^i)$ in $AD(\tf)$,
    obtaining a path from $\pi(x)$ to $\pi(\bar{x}^i)$ as needed.

  \item
    Consider now point $(ii)$ of~\cref{def:strong-pres-attr}.
    Given an attractor $\tA$ for $AD(\tf)$,
    by~\cref{thm:fixed-points} $(vi)$ there is at most one attractor intersecting $\pi^{-1}(\tA)$.
    It remains to show that $\pi^{-1}(\tA)$ contains a trap set for $f$.
    To this end, we show that $B = \{x \in \pi^{-1}(\tA)\ | \ x_v \in f_v(\pi^{-1}(\tA))\}$ is a trap set.

    Take $x\in B$ and $\bar{x}^i$ successor for $x$ in $AD(f)$. We have to show that $\bar{x}^i$ is in $B$.

    If $i=v$, then $f_v(x)\neq x_v$ and since $x$ is in $\pi^{-1}(\tA)$, $f_v(x)$ is in $f_v(\pi^{-1}(\tA))$
    and the successor is in $B$.

    For $i\neq v$, since $\pi(x)$ is in $\tA$, it is sufficient to show that there is a path in $AD(\tf)$ from $\pi(x)$ to $\pi(\bar{x}^i)$,
    since this implies that $\pi(\bar{x}^i)$ is in $\tA$ and therefore $\bar{x}^i$ is in $B$.

    As for the first part of the proof,
    by~\cref{lemma2}, we only have to consider the case where there is an edge $v\to i$ in $G(f)$
    and $x\neq\R^0(x)=\R^1(x)$, that is, $f_i(x)\neq x_i$ and $f_v(x)\neq x_v$.

    By definition of $B$, there exists $(z_U,z_W)\in\tA$ and $z_v\in\{0,1\}$ such that $f_v(z)=x_v$ with $z=(z_U,z_v,z_W)$.
    If $z_v=x_v$, then since there is no positive loop at $v$ in $G(f)$ we must have $f_v(\bar{z}^v)=z_v$ and $z$ is a successor for $\bar{z}^v$.
    If instead $z_v\neq x_v$, then $\bar{z}^v$ is a successor for $z$ with $\bar{z}^v_v=x_v$
    In the first case take $y=z$, in the second define $y=\bar{z}^v$. In both cases we have $y_v=x_v$, $y_U=z_U$, $y_W=z_W$.

    By hypothesis, there exists a path in $AD(\tf)$ from $\pi(x)=(x_U,x_W)$ to $\pi(y)=(y_U,y_W)$.
    We apply again~\cref{lemma:paths-I} twice:
    \begin{itemize}
      \item since there is no path from $W$ to $U$ in $G(\tf)$, we can assume that $y_{W}=x_{W}$;
      \item since there is a path from $\pi(y)$ to $\pi(x)$ and no path from $U_2\cup W$ to $U_1$ in $G(\tf)$, we can assume $y_{U_1}=x_{U_1}$.
    \end{itemize}
    We therefore obtained a path
    \begin{equation*}
        \pi(x)=(x_{U_1},x_{U_2},x_W) \rightsquigarrow \pi(y)=(x_{U_1},y_{U_2},x_W).
    \end{equation*}
    Since $i$ does not depend on $U$, we have $f_i(y)=f_i(x_{U_1},y_{U_2},x_v,x_W)=f_i(x)\neq x_i=y_i$ and again it is easy to see that point~\ref{lem:repr-to-any2} of~\cref{lemma2} applies and there is a transition from $\pi(y)=(y_U,x_W)$ to $\pi(\bar{y}^i)=(y_U,\bar{x}_W^{i})$.

    Finally, there is a path in $AD(\tf)$ from $\pi(\bar{y}^i)=(y_U,\bar{x}_W^{i})$ to $\pi(x)=(x_U,x_W)$, and since there is
    no path from $i$ to $U$ in $G(\tf)$, by~\cref{lemma:paths-I}
    there is a path in $AD(\tf)$ from $\pi(\bar{y}^i)=(y_U,\bar{x}_W^{i})$ to $\pi(\bar{x}^i)=(x_U,\bar{x}_W^i)$, which concludes.
  \end{enumerate}
  \end{proof}

\section{Conclusion}

Boolean networks are frequently used as modelling tools, with associated dynamics often defined under asynchronous updating.
Elimination of variables can be considered to simplify the computational burden~\cite{naldi2009reduction,calzone2010mathematical,saadatpour2010attractor,saadatpour2011dynamical,saadatpour2013reduction,paracha2014formal}.
While preservation of fixed points and of some reachability properties can be shown~\cite{naldi2009reduction},
the asymptotic behaviour of the full and reduced networks can differ.
In this work we gave conditions on the interaction graph that ensure that cyclic attractors are preserved (\cref{thm:pres-attr}),
and presented examples showing the differences in asymptotic behaviour that can arise when these conditions are not satisfied.
In particular, we showed that Boolean networks with very similar interaction graphs,
differing only in a single intermediate in a chain of intermediate variables,
can have different asymptotic behaviours (\cref{thm:chain}).
We also illustrated how the reduction method can be extended to variables that are negatively autoregulated,
and discussed the effects of this elimination on the attractors (\cref{lemma2,thm:fixed-points}).
We showed that a known bound on the number of attractors of asynchronous dynamics~\cite{richard2009positive}
is a corollary of these properties (\cref{thm:bound-attrs}).
The approach presented here broadens the applicability of variable elimination in the investigation of Boolean network dynamics.
Further extensions of this elimination approach to discrete systems with more than two levels will be considered in the future.

\includegraphics{}

\acknowledgements
The authors thank the reviewers for their comments.

\nocite{*}
\bibliographystyle{abbrvnat}
\bibliography{biblio}
\label{sec:biblio}

\end{document}